\newtheorem{lemma}{Lemma}
\newtheorem{remark}{Remark}
\newcommand{\revisNew}{\textcolor{black}}
\newcommand{\be}{\begin{equation}}
\newcommand{\ee}{\end{equation}}
\newcommand{\ist}{\hspace*{.4mm}}
\newcommand{\rmv}{\hspace*{-.4mm}}
\begin{document}

\title{Arithmetic Average Density Fusion - Part III: Heterogeneous Unlabeled and Labeled RFS Filter Fusion %based on PHD-AA
%for Scalable Multisensor Multitarget Tracking
%\\Fusion %based on PHD-AA
%\hspace{-0.3mm}for \hspace{-0.3mm}Scalable \hspace{-0.3mm}Multisensor \hspace{-0.3mm}Multitarget \hspace{-0.3mm}Tracking
}

\author{Tiancheng~Li,~\IEEEmembership{Senior Member,~IEEE}, Ruibo Yan, Kai Da and Hongqi Fan%%
%\thanks{Manuscript preprint: arxiv.org/abs/2303.09401} %preprint is available at https://arxiv.org/abs/2303.09401. Correction has been made. Simulation study has been completed.}
\thanks{%Manuscript preprint: arxiv.org/abs/2303.09401. 
This work was partially supported %by Key Laboratory Foundation of National Defence Technology under Grant 61424010306, %by Space Science and Technology
 by National Natural Science Foundation of China (Grant No. 62071389), Natural Science Basic Research Program of Shaanxi Province (Program No. 2023JC-XJ-22), %JWKJW Foundation (2021-JCJQ-JJ-0897), %, 2020-JCJQ-ZD-150-12),
 and Key Laboratory Foundation of National Defence Technology (No. JKWATR-210504). % and the Fundamental Research Funds for the Central Universities
}
\thanks{T.\ Li and R. Yan are with the Key Laboratory of Information Fusion Technology (Ministry of Education), School of Automation, Northwestern Polytechnical University, Xi'an 710129, China, e-mail: t.c.li@nwpu.edu.cn, 2018300490@mail.nwpu.edu.cn%
}
\thanks{K. Da and H. Fan are with the Key Laboratory of Science and Technology on ATR, National University of Defense Technology, Changsha 410073, China.
e-mail: dktm131@163.com, fanhongqi@nudt.edu.cn}
}

\maketitle

\begin{abstract}
This paper proposes a heterogenous density fusion approach to scalable multisensor multitarget tracking %using a heterogenous network
where the inter-connected sensors run different types of random finite set (RFS) filters according to their respective capacity and need. {These diverse RFS filters} result in heterogenous multitarget densities that are to be fused with each other in a proper means for more robust and accurate detection and localization of the targets.
%Our recent work %(Part II of the series of work)
%has exposed a key
%common property of effective arithmetic average (AA) fusion approaches to both unlabeled and labeled RFS filters which are all built on averaging their relevant unlabeled/labeled probability hypothesis densities (PHDs). %, namely the first order moments of the locally estimated multitarget probability densities. %However, existing approaches address only the fusion of homogeneous filters. %An improved PHD estimate will surely improve the filter estimate.
%Thanks to this, this
%%This
%paper proposes the first ever heterogenous unlabeled and labeled RFS filter cooperation approach
Our approach is based on Gaussian mixture implementations where the local Gaussian components (L-GCs) are {revised for PHD consensus, i.e., the corresponding unlabeled probability hypothesis densities (PHDs) of each filter best fit their average} regardless of the specific type of the local densities. To this end, a computationally efficient, \revisNew{coordinate descent} approach is proposed which only revises the weights of the L-GCs, keeping the other parameters %of the L-GCs
unchanged. % for approximating unlabeled PHD-AA fusion.
%To demonstrate this,
In particular, the PHD filter, the unlabeled and labeled multi-Bernoulli (MB/LMB) filters are considered. % in particular. %, which partially enables parallelization of the communication and the filtering calculation.
Simulations have demonstrated the effectiveness of the proposed approach for both homogeneous and heterogenous fusion of the PHD-MB-LMB filters in different configurations. %In addition, we show negative results of the heterogenous fusion of the unlabeled/labeled MB filter with the PHD filter. % based on the proposed PHD-AA fusion approach. % which ensures consistency in the (labeled) PHD estimation.
\end{abstract}

% Note that keywords are not normally used for peer-review papers.
\begin{IEEEkeywords}
Random finite set, arithmetic average fusion, heterogenous fusion, PHD consistency, multitarget tracking
\end{IEEEkeywords}

%\end{frontmatter}

%% \linenumbers

\section{Introduction}

With the proliferation of internet of things \cite{Qiu18HeteroIoT}, heterogenous sensor networks %\cite{} %Alam17,
become {promising for multisensor multitarget (MSMT) tracking} where the sensors have unequal computing and memory capacities {and/or different measurement models} and correspondingly run their own suitable algorithms.
In fact, it is {often required to operate} different algorithms even in a homogeneous sensor network, to the prejudice of the network reliability and lifetime \cite{Yarvis05hetero, Liggins17}. In either case, the local filters are allowed to apply different statistical models (regarding the target birth, detection, survival, and death and the clutter) and {approximation} algorithms. Thanks to the diversity, the combination {of heterogenous filters} is naturally more robust and reliable as compared with the unitary one.
In this paper, we consider such a {heterogenous MSMT tracking problem} where (whether homogenous or heterogenous) netted sensors run different types of multitarget filters/trackers {derived by using different families of random finite sets (RFSs)} such as unlabeled and labeled RFS filters \cite{Mahler14book,Vo15mtt}. {Our goal is to find a technically solid means to cooperate these cooperative filters via inter-sensor information fusion for improving their respective, heterogenous estimation.} %This leads to a significant challenge for inter-sensor multitarget information fusion since
%As a result,
%the locally obtained multitarget probability densities are heterogenous across the sensors. Yet, it is of significance to fuse them in a proper means which does not demolish the local filters. %can not be done as in any exiting works.

One solution to the above heterogenous filter cooperation could be given by inter-sensor sharing the raw measurements so that the heterogenous filters just use the aggregated measurements of different sensors. This {is particularly useful for dealing with the lack of observability \cite{Dimitri21,Xie18} which, however, }%needs an inter-node communication protocol like flooding \cite{Li17flooding} which is often communication costly and
does not suit the large-scale peer-to-peer networks. %, namely unscalable with the number of sensors.
More importantly, it is computationally intractable %for the RFS filters
to make the optimal use of the measurements of all sensors due to the explosive possibility for track-to-measurement association \cite{Mahler14book,Delande11MsPHD,Nannuru16MsCPHD,Wei16centLMB,Saucan17MsMB,Vo19msGLMB,Si20msPMBM,Robertson22MsLMB,Trezza22Msbirth,Moratuwage22MSMoT}.
%{In particular, a key design requirement for wireless sensor network applications is the restrictive on-board inter-node communication energy. Long range data transmission should be avoided to conserve energy. Instead, transmission is limited to among the closely distributed, neighbor nodes for which the distributed network is preferred.}
Differently, we resort to the computationally efficient and scalable density average consensus approach \cite{Li22chapter} to fuse the probability hypothesis density (PHD) filter \cite{Vo05,Vo06}, the unlabeled multiple Bernoulli (MB) filter \cite{Vo09CBmember} and the labeled MB (LMB) filter \cite{Vo13Label,Reuter14LMB}. % which remains a significant challenge. %can hardly be satisfied by existing information fusion algorithms.
{It requires only local data communications and simple fusion computation and is proven resilient to complicated/unknown inter-sensor correlation \cite{Bailey12,Li20AAmb}.}
According to the best of our knowledge, all existing density fusion approaches \cite{Li22chapter,Li22RFS-AA-Derivation} are homogeneous {in the parameter family of the fusing densities}, i.e., fusion is carried out among the same type of filters. Note that the defined heterogeneous fusion is different from %that where the locally estimated distributions describe different states of interest which are subsets of the joint state
what given in
\cite{Petitti11HeterEstimates,Dagan20HeterogeneousChannel,Yi21Heterogeneous,Arulampalam21HeteBearing} where the filters to be fused are still the same type. %,Dagan21HeterogeneousChannel
%Heterogeneous tracks
%In the context of single sensor MODT, a statistically rigorous and promising approach is given by the random finite set (RFS) theory \cite{Mahler14book} based on which different statistical models %(regarding the target birth, detection, survival, and death and the clutter)
%and simplifying approximations lead to a variety of useful RFS filters \cite{Vo15mtt}. % based on the average/consensus fusion \cite{Li22chapter}.

%While there have been considerable solutions for homogenous RFS filter fusion based on either centralized optimal fusion or distributed suboptimal fusion \cite{Li22chapter,Li22RFS-AA-Derivation}, it is still open how to fuse these different types of RFS densities, such as the popular probability hypothesis density (PHD) filter, the multi-Bernoulli (MB) filter, not to say to fuse unlabeled and labeled RFS filters.  %This motivates our work in this paper.

Our proposal in this paper is based on
%An emerging approach to multisensor RFS filter is given by
the arithmetic-average (AA) density fusion \cite{Li2021SomeResults,Li22RFS-AA-Derivation}, {which has recently been tailored to accommodate different RFS densities all demonstrating significant advantages in dealing with misdetection and in high computing efficiency}. %promising performance for MSMT tracking %
%especially {with the use of} distributed sensor networks.}
Effective AA fusion implementations of either unlabeled or labeled RFS filters are all derived from the same (labeled) PHD-AA formulation which ensures consistency in the (labeled) PHD estimation \cite{Li22RFS-AA-Derivation} {and ultimately} more accurate and robust detection and localization of the present targets.
%An obvious benefit of this common PHD-AA fusion framework is that it
This does not only expose the fact that existing linear-fusion-based multi-sensor RFS filters merely seek consensus over the (labeled) PHDs rather than the multi-object probability densities (MPDs) but also paves a way for heterogenous RFS filter cooperation via averaging their respective unlabled/labeled PHDs as shown in Fig. \ref{fig:HFframework}. In view of this, this paper presents the first ever heterogenous unlabeled and labeled RFS filter cooperation approach based on %approximate unlabeled PHD-AA fusion via
the Gaussian mixture (GM) implementation. %It allows the local filter to adopt different model assumptions (regarding the target birth, persistency, detection, etc), renders sufficient flexibility for the local filter design.

%To this end, we propose a heterogenous RFS filter fusion approach where each sensor operates an arbitrary multitarget (labeled) RFS filter according to its own capacity and need which is fused with the others via linearly averaging their (unlabeled) PHDs.

%
%% \subsection{Finite Set Statistics}
%The performance of the homogeneous RFS filter fusion has been well demonstrated with regard to the PHD filter \cite{Li17merging,Li17PC,Li17PCsmc}, % ,Li19Diffusion,Yi20AAfov
%the cardinalized PHD filter \cite{Yu16,Gao20cphd,Da_Li20TSIPN}, the Bernoulli filter \cite{Li19Bernoulli}, the MB filter \cite{Li20AAmb}, the Poisson MB mixture filter \cite{Li23AApmbm}, the labeled MB filter \cite{Gostar20,Gao20GLMB}. %Open-access links to some source codes are available in papers including \cite{Li18FloodingClustering,Ramachandran21,Gao21phdAAlmb} as well as in the following URL: sites.google.com/site/tianchengli85/matlab-codes/aa-fusion. None of these earlier works, however, is on heterogenous RFS filter fusion. %We omit the details of these homogeneous RFS filter fusion approaches. %which %as to be reviewed in this paper. These previous works
%%have demonstrated the performance of the approach

In this paper, the sensors are assumed connected with each other either via a centralized fusion center or via a peer-to-peer distributed network. In the latter, the popular average consensus approach \cite{Olfati-Saber07,Sayed14book} or the distributed flooding approach \cite{Li17flooding} can be used for inter-node communication. The AA density fusion can be easily applied in both cases. {Nevertheless, this paper focuses on the centralized fusion for brevity}, omitting inter-node communication issues.
%However, its application for multisensor RFS density fusion, both centralized and distributed,
We further assume that the %sensor network is connected, i.e., each sensor can be reached from each other sensor by one or multiple communication hops. The
sensors are synchronous and accurately coordinated, whose fields of view cover the region of interest (ROI). %, in which there are a random, time-varying number of targets.
%We further omit the time notation and the dependence of the filter estimate on the observation process.
{The GM implementations of the local PHD \cite{Vo06}, MB \cite{Vo09CBmember} and LMB \cite{Reuter14LMB} filters are the standard as given in the references in which details are available}. % in order to avoid distracting from our key contribution in the fusion algorithm design.
These limitations can be relaxed to extend our proposed approach. %The cardinality of a RFS ${X}$ which indicates the number of elements is denoted with $|{{X}}|$.
%{For brevity,} the

\begin{figure}
  \centering
  % Requires \usepackage{graphicx}
  \includegraphics[width=4.5cm]{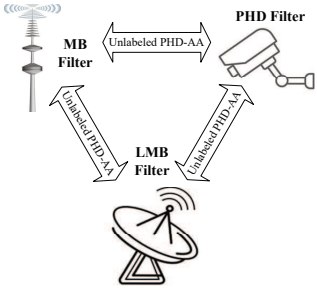}\\  % _extended
  \caption{A heterogenous RFS filter cooperation scenario based on unlabeled PHD-AA fusion} \label{fig:HFframework}
  \vspace{-2mm}
\end{figure}

The paper is organized as follows. A brief introduction to the Poisson, unlabeled and labeled MBs and their GM implementations are given in sections \ref{sec:background} and \ref{GM-implementation}, respectively. Their heterogenous fusion performed by merely revising the weights of the local Gaussian components (L-GCs) is given in section \ref{sec:GMfit}. %As such, the structure of each local filter does not need to change at all.
Simulation is given in section \ref{sec:simulation} before the paper is concluded in section \ref{sec:conclusion}.

%
%\begin{figure}
%\centering
%\centering
%  % Requires \usepackage{graphicx}
%  \includegraphics[width=3.5cm]{RFSscen.eps}
%\vspace{-1mm}
%\caption{%Set inclusion relationship among Bernoulli RFS, MB RFS and MBM RFS:
%What need to be estimated in standard RFS trackers: cardinality (the number of targets), state and label (trajectory of each target).}
%%% \colBlue{In particular, target 8 exists only for a short period of time.} }
%\label{fig:RFSscen}
%\vspace{-3mm}
%\end{figure}

\section{Preliminaries: PHD, MB and LMB} \label{sec:background}
\subsection{Basic MSMT Scenario}
The following scenario assumptions are made in this paper. Target births are independent of target survivals, which might be modelled differently at local sensors based on either Poisson or MB RFSs. Each target evolves and generates measurements independently. %The surviving process of each target is Bernoulli. That is, at
At time $k-1$, the target with state $\mathbf{x}_{k-1} \in \mathcal{X}$ will either die with probability $1-p^{\text{s}}_k(\mathbf{x}_k)$ or persists at time $k$ with survival probability $p^{\text{s}}_k(\mathbf{x}_k)$ and attains a new state $\mathbf{x}_k$ according to a Markov transition probability density function (PDF) $f_{k|k-1} (\mathbf{x}_k|\mathbf{x}_{k-1})$. Hereafter, $\mathcal{X} \subseteq \mathbb{R}^d$ denotes the $d$-dimensional state space.

%The target measurement at each sensor is also Bernoulli. That is, given
Given a target with state $\mathbf{x}_k \in \mathcal{X}$% at time $k$
, sensor $i \in \mathcal{I}$ either detects it with probability $p^{\text{d}}_{i,k}(\mathbf{x}_k)$ %(independent of the state)
and generates a measurement $\mathbf{z}_{i,k}\in Z_{i,k}$  with likelihood $g_{i,k}(\mathbf{z}_{i,k}|\mathbf{x}_k)$ or fails to detect it with probability $1-p^{\text{d}}_{i,k}(\mathbf{x}_k)$, where $Z_{i,k}$ denotes the {set} of the measurements received at time $k$ by sensor $i$.
The clutter (namely the measurement of no target) follows a Poisson RFS, independent of target measurements. % and independent among sensors. %The Poisson clutter intensity $\kappa_{i,k}$ at each sensor $i$ is independent of the target measurements.
%The states of a random number of targets can be described by a RFS ${X}_t = \big\lbrace \mathbf{x}_{t,1}, \dots, \mathbf{x}_{t,n} \big\rbrace \in \mathbb{X}$, where $n =|{{X}_t}|$ denotes the random number of targets at time $t$, %namely the cardinality of the set,
%and $\mathbf{x}_{t,i}\in \mathcal{X}$ is the state vector of the $i$-th target at time $t$ in the $d$-dimensional state space $\mathcal{X} \subseteq \mathbb{R}^d$.

\subsection{RFS Modeling}
% Mahler14book
%\begin{equation}
%  {X} = \big\lbrace \mathbf{x}_{1}, \dots, \mathbf{x}_{n} \big\rbrace \in \mathbb{X}
%\end{equation}
The states of a random number of targets are described by a RFS ${X} = \big\lbrace \mathbf{x}_{1}, \dots, \mathbf{x}_{n} \big\rbrace \in \mathbb{X}$, where $n =|{X}|$ denotes the random number of targets and $\mathbb{X}$ denotes all of the finite subsets of $\mathcal{X}$.
%The random nature of the multitarget RFS ${X}$ is captured by its multitarget probability density (MPD), denoted by $f({X})$.
For any realization of $X$ with a given cardinality $|{{X}}| =n$, namely ${X}_n = \big\lbrace \mathbf{x}_{1}, \dots, \mathbf{x}_{n} \big\rbrace$, \cite[Eq.2.36]{Mahler14book},
the MPD is defined as
%\begin{equation}\label{eq:fisst_mttPDF}
$f({X}_n) = n! \rho(n) f(\mathbf{x}_{1}, \dots, \mathbf{x}_{n})$
%\end{equation}
where the localization densities $f(\mathbf{x}_{1}, \dots, \mathbf{x}_{n} )$ are symmetric in their arguments and the cardinality distribution is given by $\rho(n)\triangleq \mathrm{Pr}\{|{{X}}|=n\} = \int_{|{X}| = n} {f({X})\delta {X}}$.
%is given by %\cite[Ch. 4.2.6]{Mahler14book}
%%\begin{align}
%$  \rho(n) % & \label{eq:def-card-mpd} \\
% = {\frac{1}{{n!}}\int_{\mathcal{X}^n} {f\big(\{ {\mathbf{x}_1},\dots,{\mathbf{x}_n}\} \big)d{\mathbf{x}_1}\dots d{\mathbf{x}_n}}}$.
%%\end{align}
%% and $ \left | X  \right |  $ denotes the cardinality of set $X$.
The set integral in $\mathbb{X}$ is defined as \cite[Ch. 3.3]{Mahler14book}
%\begin{align}
$\int_{\mathbb{X}} {f({X})\delta {X}}
    % = \sum\limits_{n = 0}^\infty  {\frac{1}{{n!}}\int_{S^n} {f(\{ {x_1},\dots,{x_n}\} )d{x_1}\dotsd{x_n}} } \\{\text{ \qquad \qquad \quad = }}
   = %\rho(0) + %f(\emptyset ) %
\sum_{n = 0}^\infty  {\frac{1}{{n!}}\int_{\mathcal{X}^n} {f\big(\{ {\mathbf{x}_1},\dots,{\mathbf{x}_n}\} \big)d{\mathbf{x}_1}\dots d{\mathbf{x}_n}}}
$ %\label{eq:set-integral-expansion}%\\
%   &= \sum\limits_{n = 0}^\infty \rho(n)
%\end{align}
where $f(\emptyset )= \rho(0)$.
%which %, similar to the conventional integral,
%is linear with regard to ${f}$, i.e.,
%$\int_\mathbb{X} {\left( {\alpha {f_1}({X}) + \beta {f_2}({X})} \right)\delta {X}}  = \alpha \int_\mathbb{X} {{f_1}({X})\delta {X}}  + \beta \int_\mathbb{X} {{f_2}({X})\delta {X}}$ and

%From above, it is easy to see that the cardinality distribution $\rho(n)$ of the RFS ${X}$ is given as \cite[Ch. 4.2.6]{Mahler14book} $\rho(n) = \int_{|{X}| = n} {f({X})\delta {X}}$.
% \begin{align}
%   \rho(n) & = \int_{|{X}| = n} {f({X})\delta {X}}  \label{eq:def-card-mpd} \\
%   &  = \frac{1}{{n!}}\int_{\mathcal{X}^n} {f(\{ {\mathbf{x}_1},\dots,{\mathbf{x}_n}\} )d{\mathbf{x}_1} \dots d{\mathbf{x}_n}}  \label{eq:def-card-mpd-expansion}
% \end{align}
The PHD $D(\mathbf{x})$, also known as the first moment density \cite[pp. 168-169]{Goodman97},
of the multitarget density $f({X})$ is a density function on single target $\mathbf{x}\in X$, defined as \cite[Ch.4.2.8]{Mahler14book} %\cite[Ch.4.2.8]{Mahler14book}
\begin{align}
D(\mathbf{x})   %&= \int_\mathbb{X}  {f\big( {\left\{ \mathbf{x} \right\} \cup X} \big)\delta X}\nonumber \\
& \triangleq \int_\mathbb{X}   {\bigg(\sum_{\mathbf{y}\in X}{\delta_\mathbf{y}}(\mathbf{x}) \bigg)f(X)\delta X} \label{def-PHD}
\end{align}
where \revisNew{${\delta_\mathbf{y}}(\mathbf{x})$ denotes the Dirac delta function concentrated at $\mathbf{y}$.} %${\delta_\mathbf{y}}(\mathbf{x}) = 1$ if $\mathbf{y} = \mathbf{x}$ and ${\delta_\mathbf{y}}(\mathbf{x}) = 0$ otherwise. %,which indicates that the PHD is a first moment density in the point process sense \cite[pp. 168-169]{Goodman97}.

The PHD is the density of the expected cardinality w.r.t. hyper-volume, which
has clear physical significance as its integral in any region $\mathcal{S} \subseteq \mathcal{X} $ gives the expected number $\hat{N}^{\mathcal{S}}$ of targets in that region, i.e.,
%\begin{equation}\label{eq:def-phd-integral=hatN}
$  \hat{N}^{\mathcal{S}} = \int_{\mathcal{S}} D(\mathbf{x}) d \mathbf{x}$.
%\end{equation}
Arguably, it tells how well the present targets are detected in the local region. The PHD plays a key role in all RFS filters and undoubtedly, a good PHD estimate implies a good filter estimate.
To distinguish the targets, one needs to use the labeled RFS (LRFS) which is a RFS whose elements are assigned with distinct labels \cite{Vo13Label,Vo14GLMB}. %$\int f({\mathbf x})d{\mathbf x}=\sum_{\ell\in\mathbb{L}}\int f(x,\ell)dx$.
Denote by {$\mathbb{L}$ all of the finite subsets of the label space.} %$\mathcal{L}$
A realization of a LRFS with cardinality $n$, multitarget state $X_n$ and label set $L_n = \big\lbrace l_{1}, l_{2}, \dots, l_{n} \big\rbrace \in \mathbb{L}$ is denoted by $\widetilde{X}_n = \{ (\mathbf{x}_1, l_1),(\mathbf{x}_2, l_2),...,(\mathbf{x}_n, l_n) \} \in \mathbb{X} \times \mathbb{L}$. The LRFS is completely characterized by its multitarget density $\pi\big(\widetilde{X}\big)$. % which is a joint distribution of the state and label.
Consequently, %the concept of PHD shall be extended to the labeled domain.
the labeled and unlabeled PHD for a labeled RFS are respectively given as follows \cite{Vo13Label}
\begin{align}
 % \rho(n) &=  \sum\limits_{c \in \mathbb{C}} \sum\limits_{L \in \mathbb{L}} {{\delta_n}[|L|] {\omega ^{(c)}}(L)}  \label{eq:glamb-cardinality-distribution} \\
  {\widetilde D} (\mathbf{x},l) & \triangleq \int_{\mathbb{X} \times \mathbb{L}}  {\pi}\big((\mathbf{x},l)\cup {\widetilde X}\big)\delta {\widetilde X} \\ %\label{eq:def-LRFS-phd-label} \\
  %&= \sum\limits_{c \in \mathbb{C}}{{s^{(c)}}(\mathbf{x},l)}\sum\limits_{L \in \mathbb{L}} {{\mathrm{1}_L}(l) {\omega ^{(c)}}(L)}  \label{eq:glamb-Lphd} \\
  D(\mathbf{x}) & \triangleq \sum\limits_{l \in \mathbb{L}} {\widetilde D}(\mathbf{x},l) \label{eq:def-LRFS-phd-nolabel} %\\
  %&= \sum\limits_{c \in \mathbb{C}}\sum\limits_{L \in \mathbb{L}} {{\omega ^{(c)}}(L) {\sum_{l\in L} {s^{(c)}}(\mathbf{x},l)}}  \label{eq:glamb-phd}
\end{align}
%where ${\mathrm{1}_x}(y) = 1$ if $y \subseteq x$ and ${\mathrm{1}_x}(y) = 0$ otherwise and the projection function $\mathcal{L}(\mathbb{X} \times \mathbb{L}) \rightarrow \mathbb{L}$ is given by $\mathcal{L}\big((\mathbf{x},l)\big) = l$.

In this paper, what indicated by the PHD is the unlabeled PHD by default unless otherwise stated.

\subsection{PHD of the Poisson, MB and LMB} \label{sec:Classic-RFS-distributions}
We hereafter omit the time notation and the dependence of the filter estimate on the observation process.
\subsubsection{Poisson RFS}

The {PHD of the Poisson RFS $X$ with mean $\lambda$ and MPD $f^{\text {p}}(X) = {{\text{e}}^{ - {\lambda}}}\prod_{\mathbf{x} \in X} {{\lambda}{s}(\mathbf{x})}$} is given by
\begin{align}
	%\label{eq:Poisson-PD} \\
%= \,\, & e^{-\int_{\mathcal{X}} D^{\text {p}}(\mathbf{x})d \mathbf{x}} \prod_{\mathbf{x}\in X}D^{\text {p}}(\mathbf{x}) \\
	D^{\text {p}}(\mathbf{x}) = \lambda {s}(\mathbf{x}) \label{eq:Poisson-PHD}
\end{align}
where $s(\mathbf{x})$ denotes the single-target probability density (SPD).

\subsubsection{MB}
%The density and PHD of a Bernoulli RFS $X^{\text{b}}$ with target existence probability $r$ and SPD $s(\mathbf{x})$ are, respectively, given by
%\begin{align}
%f^{\text{b}}\left(X\right) & = \begin{cases}
%1-r & X  =\emptyset\\
%rs\left(\mathbf{x}\right) & X =\left\{ \mathbf{x}\right\} \\
%0 & \mathrm{otherwise}
%\end{cases} \label{eq:Bernoulli-PD} \\
%D^{\text{b}}\left(\mathbf{x}\right) & = rs\left(\mathbf{x}\right) \label{eq:Bernoulli-PHD}
%\end{align}

An MB RFS $ X_n$ is the union of $n$ independent Bernoulli RFSs  \cite{Vo09CBmember} which can represent maximum $n$ targets. %, which is an efficient approximation to the multitarget Bayes filter.
Denoting the $\ell$-th Bernoulli component (BC) by $\big(r^{(\ell)},s^{(\ell)}(\mathbf{x})\big)$, the {PHD of MB RFS $ X_n$ with the MPD $f^\text{mb} (X_n) = \sum_{\uplus_{\ell=1}^{n}X^{(\ell)}=X}\prod_{\ell=1}^{n}f^\text{b} \left(X^{(\ell)}\right)$ is} given by
\begin{align}
	%\label{eq:def_MB}\\
	D^{\text {mb}}(\mathbf{x}) = \sum_{\ell=1}^n{r^{(\ell)}}{s^{(\ell)}}(\mathbf{x})  \label{eq:PHD-MB}
\end{align}
where $\uplus$ denotes the disjoint union and the density of a Bernoulli RFS $X^{(\ell)}$ with target existence probability $r^{(\ell)}$ and SPD $s^{(\ell)}(\mathbf{x})$ is given by
{$f^{\text{b}}\left(X^{(\ell)}\right) = 1-r^{(\ell)}$ if $X^{(\ell)}  =\emptyset$, $f^{\text{b}}\left(X^{(\ell)}\right) = r^{(\ell)}s^{(\ell)}\left(\mathbf{x}\right) $ if $X^{(\ell)}  =\left\{ \mathbf{x}\right\}$ and $f^{\text{b}}\left(X^{(\ell)}\right) = 0 $ otherwise}.

\subsubsection{LMB}
%Treating the index to each BC in the MB as the unique label distinguishing different targets will lead to the MPD resembling the labeled MB. That is, the
Similar to the MB filter, the LMB filter \cite{Reuter14LMB} associates each labeled BC $l \in L \in \mathbb{L}$ with SPD ${s}(\mathbf{x},l) $ and existence probability $r^{(l)}$. {The MPD is much complicated as than that for the MB filter, which can be given as $\pi^{\text {lmb}}(\widetilde{X}) = \Delta \big(\widetilde{X}\big) \omega\big(\mathcal{L}\big(\widetilde{X}\big)\big) \prod_{(\mathbf{x},l) \in \widetilde{X}} {{s}(\mathbf{x},l)}$} %% The MPD and labeled PHD are, respectively, given by
where the distinct label indicator $\Delta \big(\widetilde{X}\big)= \delta_{|\widetilde{X}|}\big(|\mathcal{L}\big(\widetilde{X}\big)|\big)$, ${\delta_I}(L) = 1$ if $I=L$ and ${\delta_I}(L) = 0$ otherwise, the projection function $\mathcal{L}(\mathbb{X} \times \mathbb{L}) \rightarrow \mathbb{L}$ is given by $\mathcal{L}\big((\mathbf{x},l)\big) = l$, $\int_{\mathcal{X}} {{s}(\mathbf{x},{l})d\mathbf{x}}  = 1$, and $\omega(L)$ denotes the hypothesis weight corresponding to the label set $L \in \mathbb{L}$, which is given by $\omega(L) = \prod_{i\in \mathbb{L}} (1-r^{(i)}) \prod_{l\in L} \frac{{\mathrm{1}_\mathbb{L}}(l)  r^{(l)}}{1-r^{(l)}}$,
%\begin{equation}\label{eq:w(L)}
%  \omega(L) = \prod_{i\in \mathbb{L}} (1-r^{(i)}) \prod_{l\in L} \frac{{\mathrm{1}_\mathbb{L}}(l)  r^{(l)}}{1-r^{(l)}}
%\end{equation}
where ${\mathrm{1}_\mathbb{L}}(l) = 1$ if $l \in \mathbb{L}$ and ${\mathrm{1}_\mathbb{L}}(l) = 0$ otherwise and $\sum_{L \in \mathbb{L}} {\omega(L)} = 1$.
The labeled and unlabeled PHD of the LMB are respectively given as follows
\begin{align}
{\widetilde D}^{\text {lmb}}(\mathbf{x},l) &{= \sum\limits_{L \in \mathbb{L}} {\mathrm{1}_L}(l)  {\omega{(L)} {{s}(\mathbf{x},l)}} }\nonumber \\
& = r^{(l)} {s}(\mathbf{x},l) \\ % \label{eq:LMB-phd} \sum_{l\in L}
{D}^{\text {lmb}}(\mathbf{x}) &= \sum\limits_{l \in L } {\widetilde D}^{\text {lmb}}(\mathbf{x},l) \nonumber \\
& = \sum\limits_{l \in L } r^{(l)} {s}(\mathbf{x},l) \label{eq:LMB-phd} % \sum_{l\in L}
\end{align}

%, and $r^{(l)}$ is the target existing probability corresponding to the label $l\in L$. %$w\big[\mathcal{L}\big(\widetilde{X}\big)\big] = (1-r)^{\mathbb{L}-\mathcal{L}\le
%
%\section{PHD-Consistency and PHD-AA Fusion} \label{PHD-Consistency}
\subsection{PHD-AA Consistency}
For a set of PHDs $D_i(\mathbf{x})$ produced by RFS filters $i \in \mathcal{I} =\{1,2,...,I\}$, the AA fusion is simply given as follows
\begin{equation}\label{eq:PHD-AA}
{D_{\text{AA}}}(\mathbf{x}) \triangleq \sum\limits_{i \in {\mathcal{I}}} {{w_i}{D_i}(\mathbf{x})}
\end{equation}
where the fusion weights $w_1,\dots,w_I >0, \sum_{i \in {\mathcal{I}}} {w_i}  =1$, and the weight space $\mathbb{W} \triangleq\{\mathbf{w} \in \mathbb{R}^{I}|\mathbf{w}^\mathrm{T}\mathbf{1}_I = 1, w_i > 0, \forall i \in \mathcal{I}\}$.
%where the fusion weights $\mathbf{w} =\{w_1,\dots,w_I\} \in \mathbb{W} \subset \mathbb{R}^{I}$, and the weight space $\mathbb{W} \triangleq\{\mathbf{w} \in \mathbb{R}^{I}|\mathbf{w}^\mathrm{T}\mathbf{1}_I = 1, w_i > 0, \forall i \in \mathcal{I}\}$.

%The MPD-AA fusion is given by % leads to
%\begin{equation}\label{eq:RFS-AA-Whole}
%{f_{\text{AA}}}({X}) \triangleq \sum\limits_{i \in {\mathcal{I}}} {{w_i}{f_i}({X})}
%\end{equation}
%which corresponds to

The AA is a Fr\'{e}chet mean in the sense of the integrated squared difference (ISD) \cite{Li20AAmb}, i.e.,
\begin{equation}
  D_\mathrm{AA}(\mathbf{x})  = \operatorname*{arg\,min}_{g\in\mathcal{F}_{\mathcal{X}}} \sum\limits_{i \in {\mathcal{I}}}{w_i\int_{\mathcal{X}} \big(D_i(\mathbf{x})-g(\mathbf{x})\big)^2 \delta \mathbf{x}}   \label{eq:FrechetAAISD}
\end{equation}
where $\mathcal{F}_\mathcal{X} \triangleq \{f: \mathcal{X} \rightarrow \mathbb{R} \}$ %$\mathcal{F}_{\mathcal{X}}$ denotes the set of scalar-valued functions in space $\mathcal{X}$: $\mathcal{F}_{\mathcal{X}} = \{f: \mathcal{X} \rightarrow \mathbb{R} \}$.

Relevantly, it is more known as follows, which we refer to as the best fit of the mixture (BFoM) \cite{Li23AApmbm}, %BFoM, c.f., \eqref{eq:AA-density-KLD},
\begin{equation}\label{eq:RFS-AA-Whole-KLD}
 {D_{\text{AA}}}(\mathbf{x}) = \operatorname*{arg\,min}_{g\in\mathcal{F}_\mathcal{X}} \sum\limits_{i \in {\mathcal{I}}} {w_iD_\text{KL}\big(D_i\|g\big)}
\end{equation}
where the Kullback-Leibler (KL) divergence, {extended to the PHD domain,} is defined as $D_\text{KL}\big(f\|g \big) \triangleq \int_\mathcal{X} {f(\mathbf{x})\log \frac{f(\mathbf{x})}{g(\mathbf{x})} \delta \mathbf{x}}$.
%$\mathcal{F}_\mathcal{X} \triangleq \{f: \mathbb{X} \rightarrow \mathbb{R} \}$ and
%the set

However, in the context of labeled MPD fusion, the labels which suffer more or less from estimate errors conditional on random false/missing data and noises are often different across the sensors. They need to be pairwise-matched between the labeled filters {to comply with the} track/label-wise fusion rule \cite{Li22RFS-AA-Derivation}. Nevertheless, a proper quantitative
definition of the divergence/distance between labels is still missing, which is actually the base for label matching. %In the literature, the inter-filter matched labels are treated as the same and the label estimate errors are omitted.
Different labeling matching choices will {in turn} lead to very different divergence values and correspondingly different fusion results \cite{Li23BMsurvey}.
%In fact, a properly defined distance/divergence for the LRFS densities needs to include both divergences between the state distribution and between (the distribution of) discrete labels.
To overcome this challenge, the fusion in this work is limited in the unlabeled domain, {more precisely unlabeled PHD-AA fusion,} which is free of label matching, although the LRFS/LMB filters are involved.

\section{GM Representations of MB/LMB-PHD} \label{GM-implementation} % and Inter-sensor Association

\subsection{GM-PHD} \label{sec:GM-PHD}
It is theoretically justified and also practically convenient to represent the RFS posterior and at the same time the corresponding PHD by a GM, which facilitates the GM-PHD-AA fusion. Actually, what is estimated and propagated over time in the PHD filter \cite{Mahler03} is the PHD, not the MPD. %Hereafter,
Straightforwardly, the GM approximation of the PHD filter $i \in \mathcal{I}$ at filtering time $k$ can be written as \cite{Vo06}:
\begin{equation}\label{eq:GM_PHD}
D_{i, k}(\mathbf{x}) \approx \sum_{j=1}^{J_{i,k}} \omega_{i,k}^{(j)} \mathcal{N}(\mathbf{x};\bm{\mu}_{i,k}^{(j)},\bm{\Sigma}_{i,k}^{(j)}) %\hspace{0.5mm},
\end{equation}
where $\mathcal{N}\big(\mathbf{x};\bm{\mu}\rmv,\bm{\Sigma}\big)$ denotes a Gaussian PDF with mean vector $\bm{\mu} $ and covariance matrix $\bm{\Sigma}$, $J_{i,k}$ is the number of GCs in total, and $\omega_{i,k}^{(j)}$ is the weight of $j$th GC at sensor $i$.

The whole PHD is thereby completely determined by the parameter set $\mathcal{G}_{i,k} \triangleq \big\{ \big( \omega_{i,k}^{(j)} , \bm{\mu}_{i,k}^{(j)} , \bm{\Sigma}_{i,k}^{(j)}  \big) \big\}_{j=1,\dots,J_{i,k}}$. The expected number of targets at sensor $i\in \mathcal{I}$ at time $k$ can be approximated by
{$\hat{N}_{i,k} = \sum_{j=1}^{J_k} \omega_{i,k}^{(j)}$}.

\subsection{GM Representation of the MB PHD} \label{sec:MB-AA}
Consider the GM implementation of the MB posterior %\eqref{eq:def_MB}
represented by a set $L_{i,k}$ of BCs at filtering time $k$ by sensor $i$. Each BC $\big(r_{i,k}^{(\ell)}, s_{i,k}^{(\ell)}(\cdot)\big), \ell \in L_{i,k}$ is represented by $J_{i,k}^{(\ell)}$ GCs weighted by $\omega_{i,k}^{(\ell,\iota)} \!\ge\rmv 0$, $\iota=1,\dots,J_{i,k}^{(\ell)}$, i.e.,
\begin{equation}
s_{i,k}^{(\ell)}(\mathbf{x}) = \sum _{\iota=1}^{J_{i,k}^{(\ell)}} \omega_{i,k}^{(\ell,\iota)} \ist \mathcal{N}\big(\mathbf{x};\bm{\mu}_{i,k}^{(\ell,\iota)}\rmv,\bm{\Sigma}_{i,k}^{(\ell,\iota)}\big) \label{eq:BC_GM} %\ist,
\end{equation}
where
\begin{equation}\label{eq:BCweightofGC}
  \sum _{\iota=1}^{J_{i,k}^{(\ell)}} \omega_{i,k}^{(\ell,\iota)} = 1
\end{equation}

Each BC is completely determined by the {GM} parameter set $\mathcal{G}^{(\ell)}_{i,k} \triangleq \big\{ \big( \omega_{i,k}^{(\ell,\iota)} , \bm{\mu}_{i,k}^{(\ell,\iota)} , \bm{\Sigma}_{i,k}^{(\ell,\iota)}  \big) \big\}_{\iota=1,\dots,J_{i,k}^{(\ell)}}$ {by which the PHD \eqref{eq:PHD-MB}} is approximated as a set of GMs as follows
\begin{equation}\label{eq:mbPHD_GM}
{D_{i,k}}(\mathbf{x}) \approx \sum_{\ell \in L_{i,k}} r_{i,k}^{(\ell)} \sum _{\iota=1}^{J_{i,k}^{(\ell)}} \omega_{i,k}^{(\ell,\iota)} \mathcal{N}\big(\mathbf{x};\bm{\mu}_{i,k}^{(\ell,\iota)}\rmv,\bm{\Sigma}_{i,k}^{(\ell,\iota)}\big)
\end{equation}
which may be expressed as a unified GM of the parameter set $\mathcal{G}_{i,k} \triangleq \big\{ \mathcal{G}^{(\ell)}_{i,k} \big\}_{\ell \in L_{i,k}}$ as follows, c.f., \eqref{eq:GM_PHD},
\begin{equation}\label{eq:mbPHD_UnifiedGM}
{D_{i,k}}(\mathbf{x}) \approx \sum_{j=1}^{J_{i,k}} \omega_{i,k}^{(j)} \mathcal{N}(\mathbf{x};\bm{\mu}_k^{(j)},\bm{\Sigma}_k^{(j)})
\end{equation}
where $J_{i,k} = \sum_{\ell \in L_{i,k}}  J_{i,k}^{(\ell)}$ and the recombined weight of each GC labeled as $j$ is given as
\begin{equation}\label{eq:mbPHD_WeightGC}
\omega_{i,k}^{(j)} = r_{i,k}^{(\ell)}  \omega_{i,k}^{(\ell,\iota)}
\end{equation}
which uses the following unique index mapping at each sensor $i\in \mathcal{I}$ at time $k$,
\begin{equation}\label{eq:indexMapping}
  j \leftrightarrow (\ell,\iota)
\end{equation}
where $j=1,...,J_{i,k}, \ell  \in L_{i,k}, \iota=1,...,J^{(\ell)}_{i,k}$.

The mean-state of the $\ell$-th BC %estimated by sensor $i$ at time $k$
as expressed in \eqref{eq:BC_GM} is calculated by
%\begin{equation}\label{eq:BC_mean_GM}
$\bar{\bm{\mu}}_{i,k}^{(\ell)} = \sum _{\iota=1}^{J_{i,k}^{(\ell)}} \omega_{i,k}^{(\ell,\iota)} \bm{\mu}_{i,k}^{(\ell,\iota)} $. % \\
% r_{i,k}^{(\ell)}  & \approx \sum _{\iota=1}^{J_{i,k}^{(\ell)}} \omega_{i,k}^{(\ell,\iota)} \label{eq:BC_r_GM} ,
%\end{equation}
%which indicate the state of BC $\ell$ estimated by sensor $i$ at time $k$.% and are fully determined by $\mathcal{G}^{(\ell)}_{i,k}$.
%\textcolor{magenta}{(Note that our notation does not indicate that the quantities $w_{k}^{(\ell,\iota)}$,  $\psi_{k}^{(\ell,\iota)}(\mathbf{x})$, $\mathbf{m}_{k}^{(\ell,\iota)}$, $\mathbf{P}_{k}^{(\ell,\iota)}$, and $\mathbf{x}_{k}^{(\ell,\iota)}$ are functionally dependent on
%$Z_{1:k}$.)}
By integrating the PHD, %as in \eqref{eq:Card_Calculation},
the number of targets is estimated at sensor $i$ by
\begin{align}\label{eq:MB_Card}
  \hat{N}_{i,k}  & = \sum_{\ell\in L_{i,k}} r_{i,k}^{(\ell)} \nonumber \\
  & =  \sum _{j=1}^{J_{i,k} } \omega_{i,k}^{(j)}
\end{align}

\subsection{GM Representation of the LMB PHD} \label{sec:MB-PHD}
The LMB can be viewed as a special MB with assigned label for each BC. That is, with slight abuse of notation and by interpreting each component index $\ell$ in \eqref{eq:BC_GM}, \eqref{eq:mbPHD_GM} and \eqref{eq:MB_Card} as a track label $l$ and set $L_{i,k}$ as the label set of sensor $i$ at time $k$, the above GM formulation \eqref{eq:mbPHD_GM} for the MB-PHD can be the same derived for calculating the unlabeled PHD of the LMB. %This somehow conflicts with the standard definition of the labels which
Note that the labels are usually ordered pairs of integers {$l = (k, \kappa)$}, where $k$ is the time of birth, and $\kappa$ is a unique index to distinguish new targets born at the same time \cite{Vo13Label}. %We omit the detail.
The unlabeled PHD $D_{i,k}(\mathbf{x})$ of the local LMB at time $k$ at sensor $i \in \mathcal{I}$, expressed by the parameter set $\mathcal{G}_{i,k} \triangleq \big\{ \big( \omega_{i,k}^{(j)} , \bm{\mu}_{i,k}^{(j)} , \bm{\Sigma}_{i,k}^{(j)}  \big) \big\}_{j=1,\dots,J_{i,k} }$, can be given by, c.f., \eqref{eq:mbPHD_UnifiedGM},
\begin{equation}
{D_{i,k}}(\mathbf{x}) = \sum_{j=1}^{J_{i,k}} \omega_{i,k}^{(j)} \mathcal{N}(\mathbf{x};\bm{\mu}_k^{(j)},\bm{\Sigma}_k^{(j)})
\end{equation}
where {multiple L-GCs may have the same label}.

\section{GM-PHD Fit via Optimizing L-GC Weights } \label{sec:GMfit}
This paper addresses averaging the PHDs of the PHD, MB, and LMB filters via the GM implementations, where the GM parameter set for time $k$ at sensor $i \in \mathcal{I}$ is denoted by $\mathcal{G}_{i,k}$ and the corresponding local PHD by $D_{i,k}(\mathbf{x})$. {The goal of the PHD-AA fusion is to update the local GM parameters so that their corresponding PHDs reach consensus as calculated by \eqref{eq:PHD-AA} in each filtering iteration. As a result, the local filters have similar or the same PHD while their MPDs are variant. A simple 1-D example is given in Fig. \ref{fig:PHDvsMB}, where the MB filter has the same PHD as that of the PHD filter while their filter posteriors are heterogeneously parameterized.}
%\begin{equation}\label{eq:AA-PHD}
%  D_\text{AA,k}(\mathbf{x}) = \sum_{i\in \mathcal{I}} w_iD_{i,k}(\mathbf{x})
%\end{equation}
%where the fusion weights $\mathbf{w} =\{w_1,\dots,w_I\} \in \mathbb{W} \subset \mathbb{R}^{I}$

\begin{figure}
  \centering
  {
  % Requires \usepackage{graphicx}
  \includegraphics[width=8.5cm]{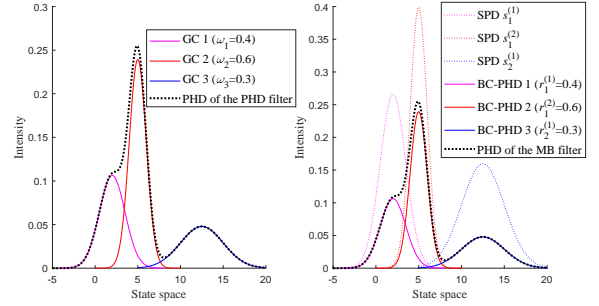}\\  % _extended
  \caption{A MB filter reaches PHD consensus with a PHD filter %for which the label matching is nontrivial.
   } \label{fig:PHDvsMB}
  \vspace{-4mm}
  }
\end{figure}

%The track association or label matching forms a significant challenge for heterogeneous filter fusion. % since simply the different filters have different algorithm structures. %which prevent matching
%For instance, it is hard to associate any GCs in the GM-PHD filter with those of the GM-LMB filter.
\begin{remark}
The key challenge to heterogeneous RFS filter fusion origins from the fact that different RFS filters result in different types of MPDs. %, although they may all be implemented by using GM approximation.
Unlike the GM-PHD filter, the GCs in the GMs of the MB and of LMB filters intrinsically belong to different (unlabeled or labeled) BCs. Therefore, if a new L-GC is created or an existing L-GC is deleted, one has to identify the BC/label it belongs to.
\end{remark}

The situation is more complicated in the case of MB mixture fusion \cite{Li23AApmbm} % and GLMB fusion \cite{Gao20GLMB}
where the measurement-to-track association hypothesis is further involved.
To address this challenge, we hereafter propose an approximate GM-PHD fitting approach which does not create or disregard any L-GCs in the local filters but only optimize their weights $\omega_{i,k}^{(j)}, j=1,...,J_{i,k}$ for consensus over their PHDs. Clearly, the resulted PHD is no more than an approximate to the {exact} PHD-AA. % but it is expected to be a good approximate; see section \ref{sec:homogeneousSim}.
This fit does not only save computation but also enables the parallel-communication-filtering operation similarly as was done in \cite{Li17PCsmc,Li19ParallelCC} since the mean $\bm{\mu}_{i,k}^{(j)}$ and covariance $\bm{\Sigma}_{i,k}^{(j)}$ of each L-GC $j$, as well as the total number $J_{i,k}$ of L-GCs, are unchanged and therefore can be updated in parallel during the fusion. We omit the detail here. Obviously, better fit can be expected if $\bm{\mu}_{i,k}^{(j)}$ and $\bm{\Sigma}_{i,k}^{(j)}$ are also optimized jointly with the weights, regardless of the computational cost.
%We leave this to the future work.  %
%We leave this to the future work. %The local GCs only need to adjust their weights by using scaling ratios yielded via fusion; see section \ref{sec:fusionfeedback}. The fusion feedback can be performed in the next filtering iteration such that the filter does not need to start the next filtering iteration after the communication and fusion.

\subsection{Minimizing ISD of GMs via Reweighting} \label{sec:ise}
To evaluate the {goodness} of fit, the KL divergence can be considered which, however, does not admit analytical solution for the GMs. For the sake of computational efficiency, one may consider the Cauchy-Schwarz divergence \cite{Kampa11CSD} \footnote{{It is necessary to note that the Cauchy-Schwarz divergence was originally defined on PDFs and is a normalized distance. Therefore, its application to the PHD is not straightforward but proper extension is needed.}} %,Hoang15
and the ISD metric as given in \eqref{eq:FrechetAAISD}, both of which allow analytical calculation for GMs. We consider the latter only, which complies with the BFoM as expressed by \eqref{eq:FrechetAAISD}. % and can thereby significantly simplify the calculation. % and can thereby significantly simplify the calculation.
Formally speaking, the goal of our approach is to determine the new weight for each L-GC at the local sensor in the following BFoM sense
\begin{align}
  \{ \omega_{i,k}\}^{\text{BFoM}} & = \operatorname*{arg\,min}_{\{ \omega_{i,k}\} \geq 0} \text{ISD}\big(D_{i,k}\| D_{\text{AA},k}\big) \label{eq:opt_W_ISD} \\
  & = \operatorname*{arg\,min}_{\{ \omega_{i,k}\} \geq 0} \int \bigg( (1-w_i) \sum_{j=1}^{J_{i,k}} \omega_{i,k}^{(j)} \mathcal{N}(\mathbf{x};\bm{\mu}_{i,k}^{(j)},\bm{\Sigma}_{i,k}^{(j)}) \nonumber \\ %D_{i,k}(\mathbf{x})
  & \hspace{3mm} - \sum_{s\in \mathcal{I}\setminus{i}} w_s \sum_{j=1}^{J_{s,k}} \omega_{s,k}^{(j)} \mathcal{N}(\mathbf{x};\bm{\mu}_{s,k}^{(j)},\bm{\Sigma}_{s,k}^{(j)}) \bigg)^2 d \mathbf{x} \label{eq:WeightISDfit}  %j\in \mathcal{I}
\end{align}
where $\big\{ \omega_{i,k}^{(j)}\big\}_{j=1,...,J_{i,k}}$ was written as $\{ \omega_{i,k}\}$ in short and $A\setminus B$ is the set difference of $A$ and $B$. %$\mathcal{I}\setminus{i} := \{1,2,...,i-1,i+1,I\}$.

%  {\bm{\mu}_n} & = \bm{\mu}_{i,k}^{(j)}  \label{eq:mu} \\
%  {\mathbf{P}_n} & = \bm{\Sigma}_{i,k}^{(j)}  \label{eq:P} \\
%  \beta_m  & = w_{s} \omega_{s,k}^{(l)}  \label{eq:beta} \\
%  {\bm{\mu}_{s,k}^{(l)}} & = \bm{\mu}_{s,k}^{(l)}  \label{eq:m} \\
%  {\bm{\Sigma}_{s,k}^{(l)}} & = \bm{\Sigma}_{s,k}^{(l)}  \label{eq:S} \\

{In addition to the above PHD-GM fit, we further communicate and average the multitarget set cardinality estimates of local sensors for more accurate estimation of the target number}, namely cardinality consensus (CC) \cite{Li19CC}, although the cardinality (distribution) is not directly estimated in the PHD, MB and LMB filters. By using \eqref{eq:MB_Card}, the AA of the estimated number of targets can be calculated as
{\begin{align}\label{eq:AA-card}
\hat{N}_{\text{AA},k} & = \sum_{i \in \mathcal{I}}  \hat{N}_{i,k} %\nonumber \\
%& = \sum_{i \in \mathcal{I}}  \sum_{\ell\in L_{i,k}} r_{i,k}^{(\ell)}
\end{align}
where $\hat{N}_{i,k} = \sum_{j=1}^{J_k} \omega_{i,k}^{(j)}$ as addressed earlier for the PHD filter in Sec. \ref{sec:GM-PHD} and for the MB/LMB filters in Sec. \ref{sec:MB-PHD}.}

{By taking into account both the non-negative constraint and the above CC constraint of the weights, the GM-weight-fit problem can be formulated as the following constrained multivariate optimization (CMO) problem
  \begin{align}
    &\operatorname*{min} J\big(\{ \omega_{i,k}\}\big) = \text{ISD}\big(D_{i,k}\| D_{\text{AA},k}\big) \label{eq:CMO} \\
    &\text{s.t.} \hspace{3mm} \omega_{i,k}^{(j)} \geq 0, j=1,...,J_{i,k} \\
    & \hspace{6mm} \sum_{j=1}^{J_k} \omega_{i,k}^{(j)} = \sum_{j'=1}^{J'_k} \omega_{i',k}^{(j')}, \forall i' \neq i \label{eq:CC_constraint}
  \end{align}}

{The most challenging issue of the above CMO problem is from the CC constraint \eqref{eq:CC_constraint}. To simplify this problem, we address it in a separate step at the end of our algorithm; see section \ref{sec:CC}. By removing it temporally, the left inequality CMO problem leads to the following Lagrangian function
\begin{align}
  L\big( \{ \omega_{i,k}\}, \{\lambda_j\} \big)= J\big(\{ \omega_{i,k}\}\big) - \sum_{j=1}^{J_{i,k}} \lambda_j \omega_{i,k}^{(j)} %,\{d_j\}
\end{align}
where $\{\lambda_j\}_{j=1,...,J_{i,k}}$ are the Lagrange multipliers. }

{The Karush-Kuhn-Tucker conditions for this CMO problem are given as follows
\begin{align}\label{eq:KKT}
  \frac{\partial J\big(\{ \omega_{i,k}\}\big) }{\partial \omega_{i,k}^{(j)}} - \lambda_j & = 0, j=1,2,..., J_{i,k} \\
  \lambda_j \omega_{i,k}^{(j)} & = 0, j=1,2,..., J_{i,k}  \\
  \lambda_j & \geq 0, j=1,2,..., J_{i,k}
\end{align}}

%{We further note that there is an additional cardinality consensus constraint to be satisfied as to be addressed }

\subsection{A Coordinate Descent CMO Solver and Over-fit}

\begin{lemma}
{Relaxing the non-negative constraint for the weights $\omega_{i,k}^{(j)}\geq 0$,} \eqref{eq:WeightISDfit} can be solved by, $\forall i \in \mathcal{I}, j=1,...,J_{i,k}$,
\begin{align}
 & \omega_{i,k}^{(j),\text{BFoM}} = \nonumber \\
 & \frac{\Delta} {(1-w_i)} {\sum\limits_{s\in \mathcal{I}\setminus{i}} \sum\limits_{l=1}^{J_{s,k}}{w_{s} \omega_{s,k}^{(l)}}\mathcal{N}\big({\bm{\mu}_{i,k}^{(j)}};{\bm{\mu}_{s,k}^{(l)}},{\bm{\Sigma}_{i,k}^{(j)}}{\rm{ + }}{\bm{\Sigma}_{s,k}^{(l)}}\big)}\nonumber \\
 &- {\Delta} \sum\limits_{j'\in \mathcal{J}_i^{-j} } {\omega_{i,k}^{(j')}\mathcal{N}\big({\bm{\mu}_{i,k}^{(j)}};\bm{\mu}_{i,k}^{(j')},{\bm{\Sigma}_{i,k}^{(j)}} + {\bm{\Sigma}_{i,k}^{(j')}}\big)} \label{eq:minISD-wi}
\end{align}
where $\Delta = {\left| {2 \bm{\Sigma}_{i,k}^{(j)} } \right|}^{1/2} { (2\pi )}^{d/2} $ and $\mathcal{J}_i^{-j} = \{1,..,J_{i,k}\} \setminus j$.
\end{lemma}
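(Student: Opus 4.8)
The plan is to exploit that, with all L-GC means and covariances held fixed (as well as every other sensor's parameters), the ISD objective in \eqref{eq:WeightISDfit} is a convex quadratic form in the local weight vector $\{\omega_{i,k}^{(j)}\}_{j=1}^{J_{i,k}}$, for which a closed-form coordinate-wise stationarity condition is available. First I would expand the squared integrand and move the integral inside the finite sums, so that every resulting term is an integral of a product of two Gaussian PDFs. Invoking the standard identity $\int_{\mathcal{X}} \mathcal{N}(\mathbf{x};\bm{\mu}_1,\bm{\Sigma}_1)\mathcal{N}(\mathbf{x};\bm{\mu}_2,\bm{\Sigma}_2)\,d\mathbf{x} = \mathcal{N}(\bm{\mu}_1;\bm{\mu}_2,\bm{\Sigma}_1+\bm{\Sigma}_2)$, the objective becomes $J(\{\omega_{i,k}\}) = (1-w_i)^2\sum_{j,j'}\omega_{i,k}^{(j)}\omega_{i,k}^{(j')}\mathcal{N}\big(\bm{\mu}_{i,k}^{(j)};\bm{\mu}_{i,k}^{(j')},\bm{\Sigma}_{i,k}^{(j)}+\bm{\Sigma}_{i,k}^{(j')}\big) - 2(1-w_i)\sum_{j}\omega_{i,k}^{(j)}\sum_{s\in\mathcal{I}\setminus i}w_s\sum_{l}\omega_{s,k}^{(l)}\mathcal{N}\big(\bm{\mu}_{i,k}^{(j)};\bm{\mu}_{s,k}^{(l)},\bm{\Sigma}_{i,k}^{(j)}+\bm{\Sigma}_{s,k}^{(l)}\big) + c$, where $c$ lumps together the terms that depend only on the (fixed) parameters of the other sensors. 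Note the distinct prefactors $(1-w_i)^2$ on the self-interaction terms and $(1-w_i)$ on the cross-sensor terms, which trace back to $D_{i,k} - D_{\text{AA},k} = (1-w_i)D_{i,k} - \sum_{s\neq i}w_s D_{s,k}$.

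Second, because the non-negativity constraint is relaxed, the KKT system \eqref{eq:KKT} collapses to $\partial J/\partial\omega_{i,k}^{(j)} = 0$ with all multipliers $\lambda_j = 0$; this is exactly the coordinate-descent stationarity condition. I would differentiate the quadratic form above with respect to a single coordinate $\omega_{i,k}^{(j)}$, using the symmetry $a_{jj'}=a_{j'j}$ of the self-interaction kernel to produce the familiar factor of two, then divide through by $2(1-w_i)$. Isolating the diagonal term $(1-w_i)\,\omega_{i,k}^{(j)}\,\mathcal{N}\big(\bm{\mu}_{i,k}^{(j)};\bm{\mu}_{i,k}^{(j)},2\bm{\Sigma}_{i,k}^{(j)}\big)$ and transposing the $j'\in\mathcal{J}_i^{-j}$ terms and the cross-sensor sum, I would solve for $\omega_{i,k}^{(j)}$. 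The constant $\Delta$ in \eqref{eq:minISD-wi} then appears purely as the reciprocal $\big[\mathcal{N}(\bm{\mu}_{i,k}^{(j)};\bm{\mu}_{i,k}^{(j)},2\bm{\Sigma}_{i,k}^{(j)})\big]^{-1} = (2\pi)^{d/2}\,\big|2\bm{\Sigma}_{i,k}^{(j)}\big|^{1/2}$, and the $1/(1-w_i)$ prefactor together with the negative sign on the $\mathcal{J}_i^{-j}$ correction term drop out naturally. This reproduces \eqref{eq:minISD-wi} verbatim.

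Third, to confirm the stationary point is the (relaxed) minimizer and not a saddle, I would observe that the Hessian of $J$ with respect to $\{\omega_{i,k}^{(j)}\}$ equals $2(1-w_i)^2$ times the Gram matrix $\big[\langle \mathcal{N}_{i,k}^{(j)},\mathcal{N}_{i,k}^{(j')}\rangle_{L^2(\mathcal{X})}\big]_{j,j'}$, which is positive semidefinite in general and positive definite whenever the L-GCs are linearly independent in $L^2(\mathcal{X})$; hence $J$ is convex and any stationary point is a global minimum over the unconstrained weight vector. I would close with the remark that \eqref{eq:minISD-wi} expresses each coordinate through the remaining ones, i.e.\ it is one equation of the linear system $\nabla J = 0$; swept cyclically over $j$ (and, in the multi-sensor setting, over $i$) it constitutes a Gauss--Seidel / coordinate-descent update, which converges for this positive-(semi)definite system, so the lemma does solve \eqref{eq:WeightISDfit} in the stated relaxed sense.

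I expect no genuine obstacle here — the content is the Gaussian-product integral plus elementary calculus. The only place demanding care is the bookkeeping: keeping the $(1-w_i)$ versus $(1-w_i)^2$ factors attached to the correct groups of terms, and handling the combinatorial factor of two from differentiating the symmetric self-interaction form, so that the $\Delta/(1-w_i)$ scaling and the sign of the $\mathcal{J}_i^{-j}$ term in \eqref{eq:minISD-wi} emerge correctly.
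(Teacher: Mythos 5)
Your proposal is correct and follows essentially the same route as the paper's Appendix~\ref{appendix:iseMIN}: expand the squared difference via the Gaussian product integral $\int\mathcal{N}(\mathbf{x};\bm{\mu}_1,\bm{\Sigma}_1)\mathcal{N}(\mathbf{x};\bm{\mu}_2,\bm{\Sigma}_2)\,d\mathbf{x}=\mathcal{N}(\bm{\mu}_1;\bm{\mu}_2,\bm{\Sigma}_1+\bm{\Sigma}_2)$ to obtain a convex quadratic in the weights, set the coordinate-wise derivative to zero, and solve, with $\Delta$ arising as the reciprocal of the diagonal kernel $\mathcal{N}(\bm{\mu}_{i,k}^{(j)};\bm{\mu}_{i,k}^{(j)},2\bm{\Sigma}_{i,k}^{(j)})$; your $(1-w_i)$ bookkeeping matches the paper's substitution $\alpha_n=(1-w_i)\omega_{i,k}^{(j)}$, $\beta_m=w_s\omega_{s,k}^{(l)}$ into \eqref{eq:minISD-a}. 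Your added observations (the full Hessian being a positive semidefinite Gram matrix and the Gauss--Seidel interpretation) slightly strengthen the paper's per-coordinate second-derivative check \eqref{eq:TwicederISD} but do not change the argument.
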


\begin{proof}
See Appendix \ref{appendix:iseMIN}. The result is from \eqref{eq:minISD-a}.
\end{proof}

%As the typical case of the constrained multivariate optimization,
%{Obviously,} in \eqref{eq:minISD-wi} depends on the other weights $\omega_{i,k}^{(j')}, j'\in \mathcal{J}_i^{-j} $ in the local GM.
By using \eqref{eq:minISD-wi}, \revisNew{we propose a coordinate descent method
that sequentially updates} the weight of each L-GC $j=1,...,J_{i,k}$ for one or multiple iterations. %in which $\omega_{i,k}^{(j)}$ is calculated based on} the newest available weights of the other L-GCs {in the non-negative space}.
That is, in iteration $t\in \mathbb{N}^+$, $\omega_{i,k}^{(j')}, j'\in \mathcal{J}_i^{-j} $ in the right hand of \eqref{eq:minISD-wi} is defined as follows for calculating $\omega_{i,k}^{(j,t),\text{BFoM}}$,
\begin{equation}\label{eq:omega_iteration}
  \omega_{i,k}^{(j')} :=  \begin{cases}
\omega_{i,k}^{(j',t)} &  j'< j\\
\omega_{i,k}^{(j',t-1)}  & j'>j
\end{cases}
\end{equation}
where $\omega_{i,k}^{(j,t)}$ denotes the updated weight of the $j$th L-GC in iteration $t$ and $\omega_{i,k}^{(j,0)} :=\omega_{i,k}^{(j)}$.

The above \revisNew{coordinate descent} approach to the {CMO} problem {often suffers} from over-fit, {i.e., the non-negative constraint $\{ \omega_{i,k}\} \geq 0$ is violated or the resulting weight is over-large}.
This can be addressed as follows, $\forall i \in \mathcal{I}, j=1,...,J_{i,k}$,
\begin{align}
 \tilde{\omega}_{i,k}^{(j,t)} & = \text{max}\Big( \epsilon,  \omega_{i,k}^{(j,t),\text{BFoM}} \Big) \label{eq:bound-w}
\end{align}
where $\epsilon \geq 0$ is a small constant (e.g., 0.01) specified to avoid generating negative weight. % for which we use $0.1\omega_{i,k}^{(j)}$, i.e. set the weight 90\% lower if it is supposed to be negative according to \eqref{eq:minISD-wi}.

More importantly, we further take the following strategy to avoid over-fit
\begin{align}
 \omega_{i,k}^{(j,t)} & = \alpha_t \tilde{\omega}_{i,k}^{(j,t)} + (1-\alpha_t) \omega_{i,k}^{(j,t-1)} \label{eq:upd-w-alpha}
\end{align}
where $0<\alpha_t<1$ is a scaling factor set for iteration $t$% to balance the final updating weight between the original weight and the new weight given in \eqref{eq:minISD-wi}. It
, which can be interpreted as a learning rate.

{The learning rate $\alpha_t$ may be set simply fixed. % for which we found a good choice often lies in $[0.1, 0.4]$.
However, there is no guarantee for the sequential fit approach with fixed learning rate $\alpha$ to converge to the optimal solution except that $\alpha$ gradually reduces in a proper means with the increase of $t$. In the latter, we suggest the following strategy by employing a hyper-parameter $\beta_t$
\begin{equation}\label{eq:LearningRateRule}
  \alpha_{t+1} = \beta_t\alpha_t
\end{equation}
where $0 < \beta_t \leq 1$ can be interpreted as a fading rate and $\beta_t =1$ means a fixed learning rate.}

The above {GM-weight-fit iteration based on either fixed or adaptive learning rate} may %start from the L-GC with the largest weights and will
be terminated at a convergence level (i.e., local PHDs approximately reach consensus, e.g., $\text{ISD}\big(D_{i,k}\| D_{\text{AA},k}\big)\leq \varepsilon$ where $\varepsilon$ is a small threshold) or up to a maximum number of fit iterations, e.g., $t\leq t_{\text{max}}= 3$. %Notably, as shown in our simulations in section \ref{sec:simulation}, a too large $t_{\text{max}}$ will lead to over-fit especially for the LMB filter. %We leave this open issue to our future work.

\subsection{Fusion Feedback} \label{sec:fusionfeedback}
\subsubsection{PHD filter} Each L-GC $j=1,...,J_{i,k}$ of sensor $i\in \mathcal{I}$ that is used for representing the PHD will be simply re-weighted as $\omega_{i,k}^{(j,t)}$ as calculated in \eqref{eq:upd-w-alpha} after $t$ GM-weight-fit iterations at time $k$, which can be written as follows
\begin{align}
\omega_{i,k}^{(j),\text{upd}} = \omega_{i,k}^{(j,t)}
\end{align}
\subsubsection{MB/LMB filter}
Based on the unique mapping \eqref{eq:indexMapping}, we get the updated, un-normalized weight for $\iota$th L-GC of the $\ell$th BC/track in sensor $i$ at time $k$ from the fused weight $\omega_{i,k}^{(j,t)}$ as calculated in \eqref{eq:upd-w-alpha} after $t$ GM-weight-fit iterations as follows
\begin{align}
\tilde{\omega}_{i,k}^{(\ell,\iota),\text{upd}} = \omega_{i,k}^{(j,t)}
\end{align}
where $j=1,...,J_{i,k}, \ell \in L_{i,k}, \iota=1,...,J^{(\ell)}_{i,k}$.

According to \eqref{eq:BCweightofGC}, the weight of the L-GCs in each BC/track should then be normalized, i.e.,
\begin{equation}
\omega_{i,k}^{(\ell,\iota),\text{upd}} = \frac{\tilde{\omega}_{i,k}^{(\ell,\iota),\text{upd}}}{\sum _{\iota=1}^{J_{i,k}^{(\ell)}} \tilde{\omega}_{i,k}^{(\ell,\iota),\text{upd}}}
\end{equation}

%We further propose to adjust the existence probability $r_{i,k}^{(\ell),\text{upd}}$ of each BC according to the overall change of the weights of all of its associated L-GCs as follows
%%\eqref{eq:mbPHD_WeightGC}, the new weight $ \omega_{i,k}^{(\ell,\iota),\text{upd}}$ of each L-GC and new should be adjusted as follows
%%\begin{equation}\label{eq:mbPHD_WeightGC_new}
%%\omega_{i,k}^{(l),\text{upd}} = r_{i,k}^{(\ell),\text{upd}}  \omega_{i,k}^{(\ell,\iota),\text{upd}}
%%\end{equation}
%\begin{equation}
%{r}_{i,k}^{(\ell),\text{upd}} = r_{i,k}^{(\ell)} \sum _{l=1}^{J_{i,k}^{(\ell)}} \tilde{\omega}_{i,k}^{(\ell,\iota),\text{upd}}  \\
%\end{equation}
%
%
\subsubsection{CC} \label{sec:CC}
 Regardless that the cardinality is not really used for estimating the number of targets in the local MB/LMB filters \footnote{As usually done in the standard single sensor filter, the number of targets is estimated from the posterior cardinality distribution by taking its mode in the MB filter \cite[Sec. III.D]{Vo09CBmember} while it depends on the existing probability of each track in comparison with application specific thresholds \cite[Sec. IV.E]{Reuter14LMB}.}, it can be used to re-scale the L-GC weights $\omega_{i,k}^{(j)}, j=1,2,...,J_{i,k}$ in the PHD filter and re-scale the track existence probability $r_{i,k}^{(\ell)}, \ell\in L_{i,k}$ in the case of MB/LMB filter, respectively, as follows
\begin{align}
\omega_{i,k}^{(j),\text{upd}} &=  \frac{\omega_{i,k}^{(j,t)} \hat{N}_{\text{AA},k}}{\hat{N}_{i,k}} \label{eq:CC-PHD} \\
  r_{i,k}^{(\ell),\text{upd}} &=  \frac{r_{i,k}^{(\ell)} \hat{N}_{\text{AA},k}}{\hat{N}_{i,k}} \label{eq:CC-MB}
\end{align}

\subsection{Insufficiency for Track Fusion}\label{sec:insufficiencyofB2B}
The proposed unlabeled PHD-AA suits the most the PHD filter but not the MB/LMB filters due to two reasons %, which implicitly or explicitly represent each target using separate tracks/BCs. The unlabeled PHD fusion is insufficient for MB/LMB fusion in twofold as follows.
\begin{remark} \label{remark:insufficientPHD1}
First, the PHD is the first order moment of the MPD for which the PHD-consensus is insufficient for MB/LMB consensus. In other words, two MB/LMB densities can be variant even if their PHDs are the same. This is the limitation of existing RFS-AA fusion approaches {based on the consensus of the PHD not of the MPD} \cite{Li22RFS-AA-Derivation}.
\end{remark}
\begin{remark}\label{remark:insufficientPHD_T2T}
Second, the fusion of the MBs/LMBs needs to properly match/associate the BCs {that represent latently the same target} %in one way or another %(and the MBs in the case of MB mixture fusion \cite{Li23AApmbm}) % and GLMB fusion \cite{Gostar20,Gao20GLMB}
and then carry out the fusion in each associated group of BCs; % which will result in an exactly averaged BC of averaged existing probability \cite{Li19Bernoulli};
see BC-to-BC (B2B) association methods \cite{Li20AAmb,Yi21Heterogeneous} %, label matching methods \cite{Jiang16MB-gci,wang16MB-gci,Gao20GLMB} % SLi17LableInconsistence,%for the detail for grouping the fusing BCs via clustering or one-to-one association
%and the discussion in \cite[Sec.V.F]{Li22RFS-AA-Derivation} %,
in the homogeneous fusion case. While the B2B-based AA fusion will result in an exactly averaged BC \cite{DaKai_Li_DCAI19,Li19Bernoulli}, it is, however, inapplicable when fusing the MB/LMB filters with the PHD filters that do not have distinct BCs.
\end{remark}

To say the least, the label matching is nontrivial as the labels can be very different with each other among diverse LMB filters. This can be illustrated in Fig. \ref{fig:PHDlabel} where four GM-LMBs with completely the same GM parameters $\mathcal{G}_i = \mathcal{G}_j$ but different labels $L_i \neq L_j$, $\forall i \neq j$. Different labeling corresponds to different labeled PHDs and different target-state estimates, although their unlabeled PHDs are identical.
Overall, our {present heterogeneous GM-weight-fit approach} overcomes these challenges as it does neither seek track-to-track fusion nor fuse in the labeled domain. % and the fusion calculation is carried out in an approximate manner. % unless the fusion is carried out with regard to each label/BC. %However, we consider the heterogenous fusion between unlabeled and labeled filters, disregarding the heterogenous data association problem.

\begin{figure}
  \centering
  % Requires \usepackage{graphicx}
  \includegraphics[width=8.5cm]{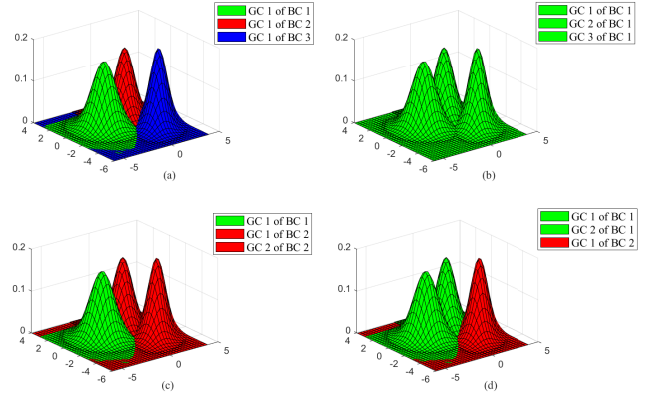}\\  % _extended
  \caption{Four LMBs of the same unlabeled PHD but different labeled PHDs and different potential target-state-estimates. %for which the label matching is nontrivial.
   } \label{fig:PHDlabel}
  \vspace{-4mm}
\end{figure}

\section{Simulations} \label{sec:simulation}
We consider a ROI given by $[-1\ist\text{km}, 1\ist\text{km}]\times [-1\ist\text{km}, 1\ist\text{km}]$ which is viewed by 4 sensors. % shows the sensor network and the target trajectories.
The target state is denoted as $\mathbf{x}_k=[x_k \; \dot{x}_k \; y_k \; \dot{y}_k]^\text{T}\rmv$ with planar position $[x_k \; y_k]^\text{T}$ and velocity $[\dot{x}_k \; \dot{y}_k]^\text{T}$.
There are totally 12 targets born at different points as shown in Fig. \ref{fig:scenario}.
The target birth is modeled by a MB process with parameters $\{r_{\text{B}} ,p_{\text{B}}^{(\ell)}(\cdot) \}_{\ell=1}^{4}$ at each time-instant, where $r_{\text{B}} =0.03$%
, and $p_{\text{B}}^{(\ell)}(\mathbf{x})=\mathcal{N}(\mathbf{x};\mathbf{\mu}_\text{B}^{(\ell)},\bm{\Sigma}_\text{B})$ with parameters $ \mathbf{\mu}_\text{B}^{(1)}=[0,0,0,0]^\text{T}$, $\mathbf{\mu}_\text{B}^{(2)}=[400\text{m},0,-600\text{m},0]^\text{T}$, $\mathbf{\mu}_\text{B}^{(3)}=[-800\text{m},0,-200\text{m},0]^\text{T}$, $\mathbf{\mu}_\text{B}^{(4)}=[-200\text{m},0,800\text{m},0]^\text{T}$, $
\bm{\Sigma}_\text{B}^{\text{ \ \ \ }}=\mathrm{diag}([10\text{m},10\text{m/s},10\text{m},10\text{m/s}]^\text{T})^{2}$. {We note that this new-born target MB modeling matches the MB/LMB filters but not the PHD filter. %Nevertheless, it turns out that this mismatch does not cause real challenges for the PHD filter.
In practice, however, it is never known what the ground truth is. The local sensors assume the target birth model based on their own prior knowledge or needs. Both Poisson and MB models are reasonable. }

Each target has a constant survival probability $0.95$ and follows a constant velocity motion (with the sampling interval $\Delta =1$s) using noiseless transition density $f_{k|k-1}(\mathbf{x}_{k}|\mathbf{x}_{k-1})=\mathcal{N}(\mathbf{x}_{k};F \mathbf{x}_{k},\mathbf{0})$ for generating the ground truth while the filters use $f_{k|k-1}(\mathbf{x}_{k}|\mathbf{x}_{k-1})=\mathcal{N}(\mathbf{x}_{k};F \mathbf{x}_{k},\mathbf{Q})$, where % $f_{k|k-1}(\mathbf{x}_{k}|\mathbf{x}_{k-1})=%
$$
F=\mathbf{I}_{2}\otimes\left[\begin{array}{cc}
1 & \Delta\\
0 & 1
\end{array}\right], \,\mathbf{Q}=25 \times\mathbf{I}_{2}\otimes\left[\begin{array}{cc}
\Delta^{2}/2 & \Delta/2\\
\Delta/2 & \Delta
\end{array}\right] %,
$$
where $\otimes$ is the Kronecker product.

The simulation is performed 100 runs with conditionally independent measurement series for 100 seconds each run. {In the following, we first} consider four sensors with the same time-invariant target detection probability $p_d = 0.9$ and linear measurement model of senor $s$ as follows
\begin{equation}\label{eq:observationModel}
\mathbf{z}_{s,k}= \left[ \begin{array}{cccc}
1 & 0 & 0 & 0 \\
0 & 0 & 1 & 0 \\
\end{array} \right] \mathbf{x}_k+ \left[ \begin{array}{c}
v_{1,k} \\
v_{2,k} \\
\end{array} \right]  %\hspace{0.5mm},
\end{equation}
with $v_{1,k}$ and $v_{2,k}$ as mutually independent zero-mean Gaussian noise with the same standard deviation of $10$m.

{We then consider four sensors with the same $p_d = 0.9$ but nonlinear measurement model as follows
\begin{equation} \label{eq:RangeBearing}
\mathbf{z}_{s,k}= \begin{bmatrix}
\sqrt{( x_k \!-\! x^{(s)} )^2 + (y_k \!-\! y^{(s)})^2} \,\,\\
\vspace{0.5mm}
\tan^{-1}\!\Big( \frac{ x_k - x^{(s)} }{ y_k - y^{(s)} } \Big)
\end{bmatrix}
+ \begin{bmatrix} v_{s,k}^{(1)} \\[0.5mm] v_{s,k}^{(2)} \end{bmatrix}
\end{equation}
where $x^{(s)}$ and $y^{(s)}$ are the coordinates of sensor $s$ which are $[-500\text{m},-800\text{m}]^\text{T}$, $[-500\text{m},800\text{m}]^\text{T}$, $[600\text{m},800\text{m}]^\text{T}$ and $[600\text{m},-800\text{m}]^\text{T}$, respectively in our case, $v_{s,k}^{(1)}$ and $v_{s,k}^{(2)}$ are independent zero-mean Gaussian with standard deviation $\sigma_1 \!=\! 10\ist$m and $\sigma_2 \rmv=\rmv (\pi/90)\ist$rad, respectively. The field of view of each sensor is a disk of radius $2$km centered at the sensor position $[x^{(s)},y^{(s)}]^\text{T}$ which covers the entire ROI.}

{In both cases, the clutter measurements of different sensors are independent which uniformly distributed over each sensor's field of view for which the number of clutter points at each scan is Poisson with rate $\kappa_s=10$.
%That is, the clutter intensity for the linear sensor is $\kappa_{k}(\mathbf{z}_k) = r_\text{c}/(2\pi \rmv\cdot\rmv 5000)$. where $\theta_{s,k}$ is zero-mean Gaussian with standard deviation $\sigma_\theta$ rad. They are $x_1=0$ m, $y_1=0$ m, $\sigma_r \!=\! 10$ m and $\sigma_\theta = -\pi/180$ rad for sensor 1 and $x_2=500$ m, $y_2=0$ m, $\sigma_r \!=\! 20$ m and $\sigma_\theta = -\pi/90$ rad for sensor 2, respectively.
}

%
%The PMBM filter implementation used a maximum number of global hypotheses
%$N_\text{max}=10$. Pruning was required for the maintenance of both PPP and MBM. Weight threshold $10^{-4}$ was used to prune the low-weighted hypotheses or Poisson components. In the latter, BCs whose existence probability was lower than
%$10^{-4}$ were removed too. Ellipsoidal gating based on the Mahalanobis distance with gate 20 was used for measurement-track association.

The filter performance is evaluated by the optimal subpattern assignment (OSPA) error \cite{Schuhmacher08}, which is given as follows,
%\begin{equation} \nonumber %\label{eq:ospa}
$d^{(c,p)}_\text{ospa}(\hat{{X}}, {X}) = \Big( \big(|\hat{{X}}|\big)^{-1}\big( d_\text{Loc}(\hat{{X}}, {X})   +  d_\text{Card} (\hat{{X}}, {X}) \big) \Big)^{\frac{1}{p}}$ for $|\hat{{X}}| \geq |{X}|$,
%\end{equation}
where the localization error (OSPA Loc) and cardinality error (OSPA Card) are defined as
$ d_\text{Loc}(\hat{{X}}, {X})  = {\mathop {\min }\limits_{\pi  \in {{\rm \Pi} _{|\hat{{X}}|}}} \sum\limits_{i = 1}^{|{X}|} {{d^{(c)}}{{({{X}_i},{\mathbf{\hat{x}}_{\pi (i)}})}^p}} }$ and $ d_\text{Card} (\hat{{X}}, {X})  = { {{c^p}}  (|\hat{{X}}| - |{X}|)}$, respectively.
%\begin{align}
%d_\text{Loc}(\hat{{X}}, {X}) & = {\mathop {\min }\limits_{\pi  \in {{\rm \Pi} _{|\hat{{X}}|}}} \sum\limits_{i = 1}^{|{X}|} {{d^{(c)}}{{({{X}_i},{\mathbf{\hat{x}}_{\pi (i)}})}^p}} } , \label{Carderr} \\
%d_\text{Card} (\hat{{X}}, {X}) & = { {{c^p}}  (|\hat{{X}}| - |{X}|)}. \label{Locerr}
%\end{align}
Here, $\pi$ and $ {\rm \Pi}_n $ are a permutation and the set of all permutations on $\{1,\ldots,n \}$, and $ {d^{(c)}}({X},\mathbf{y}) = \min \left( {d({X},\mathbf{y}),c} \right) $ is a metric between ${X}$ and $\mathbf{y}$ cut-off at $c$. If $|\hat{{X}}| < |{X}|$, $d^{(c,p)}_\text{ospa}(\hat{{X}}, {X}) = d^{(c,p)}_\text{ospa}({X}, \hat{{X}})$, $d_\text{Loc}(\hat{{X}}, {X}) = d_\text{Loc}({X},\hat{{X}})$, $d_\text{Card} (\hat{{X}}, {X})  = d_\text{Card} ({X}, \hat{{X}}) $. In our simulations, we use $c=100$m and $p=2$.

%A key difference of the GOSPA as compared with the popular OSPA \cite{Schuhmacher08} is that it adds up the localization errors for each detected target rather than averaging them. Therefore, localization errors will typically be more significant in the GOSPA.

We consider both homogeneous and heterogeneous fusion cases. In the former, all four sensors run the same PHD, MB or LMB filters, respectively, while in the latter the sensors run different filters. Different levels of fusion have been considered in both cases. They may not cooperate with each other at all (namely noncooperative), only communicate and fuse with each other the estimated number of targets via \eqref{eq:CC-PHD} using $\omega_{i,k}^{(j,t)}=\omega_{i,k}^{(j)}$ or via \eqref{eq:CC-MB} (namely CC only), or perform the proposed GM-PHD-AA fusion in various numbers of GM-weight-fit iterations, $t=1,2,...,6$. %The communication and fusion can be directional or un-directional. In the former, some sensors send its information to, but not receive any information from, some others.
{We use the learning rate $\alpha_1 = 0.2$ or $\alpha_1 = 0.4$  in \eqref{eq:upd-w-alpha} and $\beta_t=0$ (namely fixed learning rate), $\beta_t=0.6$ (namely adaptive learning rate) in \eqref{eq:LearningRateRule}, respectively.}
Furthermore, we use uniform fusing weights, no matter what filters/sensors are involved.

To set up the local filters, the maximum number of L-GCs in the local GM is $200$ for the PHD filters, the maximum number of tracks/BCs is $50$ and the maximum number of L-GC for each track/BC is $20$ in the MB/LMB filters. We note that in practice, these parameters should be designed according to the computational capacity and sensing rate of the local sensors. %There is a significant difference to our proposed, approximate fit approach in the calculation. % and the proposed GM fit can only obtain an approximate PHD-AA.
The other setup of the filters we used, as well as the ground truth, are the same as the standard one as given in the codes released by Vo-Vo at https://ba-tuong.vo-au.com/codes.html. {Codes for our simulations are available soon in the following URL: sites.google.com/site/tianchengli85/matlab-codes/aa-fusion.}

\begin{figure}
  \centering
  % Requires \usepackage{graphicx}
  \includegraphics[width=8cm]{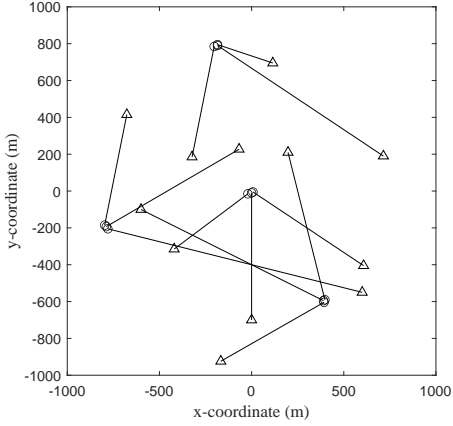}\\
  \caption{ROI and target trajectories starting from $\circ$ and ending at $\triangle$} \label{fig:scenario}
  \vspace{-2mm}
\end{figure}

\begin{figure}
  \centering
  % Requires \usepackage{graphicx}
  \includegraphics[width=8cm]{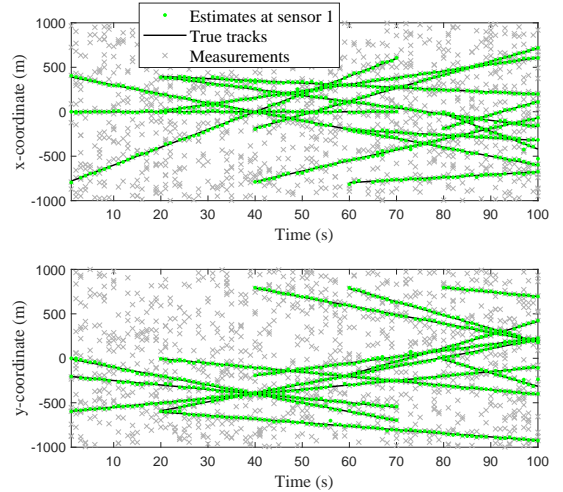}\\
  \caption{True tracks, measurements and target position estimates in one run by a local GM-PHD filter}
  \vspace{-2mm}
\end{figure}

\subsection{Homogeneous PHD/MB/LMB fusion} \label{sec:homogeneousSim}
We first consider the homogeneous PHD filter fusion in order to show the accuracy of the proposed GM fit approach based on the linear measurement model. {We also test the standard GM-PHD-AA fusion \cite{Li17PC} which averages all of the local GMs. Differently, the size of the local GM is constant in our proposed GM-PHD-fit approach while it increases and so requires appreciate GM reduction such as merging and pruning in the exact GM-PHD-AA fusion.}
The average OSPA errors, cardinality errors and localization errors over 100 runs for each filtering iteration at four sensors are shown in Fig. \ref{fig:PHD-opsa-I=1}. The average OSPA errors of the four PHD filters over all 100 runs of 100 filtering steps, {by using our proposed GM-weight-fit approach for PHD consensus based on either fixed or adaptive learning rate $\alpha_t$ and by using the standard GM-PHD-AA fusion approach, respectively}, have been given in Fig. \ref{fig:HomoPHDcomp}. It shows that the OSPA error has been significantly reduced by the proposed GM-PHD fit approach {especially when $\alpha_1=0.4, \beta_t=0.6$} although its reduction is not so significant as the standard GM-PHD-AA fusion does. {However, the GM-weight-fit approach using fixed learning rate $\alpha_t=0.2$ will diverge after $t \geq 4$ but not in the case using adaptive learning rate, although the latter slows down the convergence somehow and results in comparably less OSPA error reduction.}
%The minimum average OSPA error in the former is about $22$m while it is about $9$m in the latter; this difference can be reduced if the other parameters of the L-GCs are optimized with the weight to better fit the PHD-AA in the proposed approach.

%It is also noted that the exact GM-PHD-AA fusion will lead to almost the same result for all four local filters while these filters still differ from each other after the GM fit fusion. This complies with the fact that the latter only approximates PHD-AA fusion through merely revising the weights of the GM while the former exchanges the whole GM. %while the latter seeks exact PHD-AA fusion regardless of the GM merging error.

{Similarly, we also consider the homogeneous MB filter fusion by using the proposed GM-weight-fit approach for PHD consensus based on either fixed or adaptive learning rate $\alpha_t$ and by using the standard GM-PHD-AA fusion approach \cite{Li17PC}, respectively, in comparison with the standard MB-AA filter fusion based on the B2B association and inter-sensor GM exchange that enable the parallel Bernoulli-AA fusion, where the B2B association is carried out via the optimal assignment based on the Hungarian algorithm or clustering as addressed in \cite{Li20AAmb}. The average OSPA errors are given in Fig. \ref{fig:HomoMBcomp}. The results have demonstrated the effectiveness of the proposed, approximate GM-PHD-AA fit approach for fusing these MB filters especially in case of suitable learning rate and fading rate, although the fusion gain is much lower than the standard MB-AA fusion. Furthermore, we consider the LMB filters for which the relevant average OSPA results are given in Fig. \ref{fig:HomoLMBcomp}. Compared with the PHD and even the MB filters, the fused LMB filters are improved by the GM-weight-fit approach for PHD consensus much less and is prone to over-fit. It will even deteriorate when $t>2$ in the case of using fixed $\alpha_t=0.2$ and even when $t>1$ in case of over-large adaptive learning rate such as $\alpha_t=0.4$.}

{We conjugate that better results can be expected in the case of heterogeneous MB and LMB fusion when {an appropriate B2B association procedure is employed so that the GM-weight-fit can be performed with regard to each matched/associated BC.} To this end, the labels can be removed temporarily so that the LMBs reduce to MBs, %\cite{wang16MB-gci,SLi17LableInconsistence},
which enables BC association/clustering strategies as given for the MB-AA fusion \cite{Li20AAmb}. % when a small number of GM-weight-fit iterations are applied, e.g., $t=1$ or $2$.
Overall, the OSPA reduction in the case of MB and LMB filters is lower than that for the PHD filters---as noted in remark \ref{remark:insufficientPHD1}--- and over-fit is easier to occur in the case of using fixed learning rate. Over-fit (non-convergence) has been significantly reduced or even averted by using proper, gradually-decreasing learning rate. These results confirm the analysis given in Section \ref{sec:insufficiencyofB2B}.}

\begin{figure}
  \centering
  \includegraphics[width=8cm]{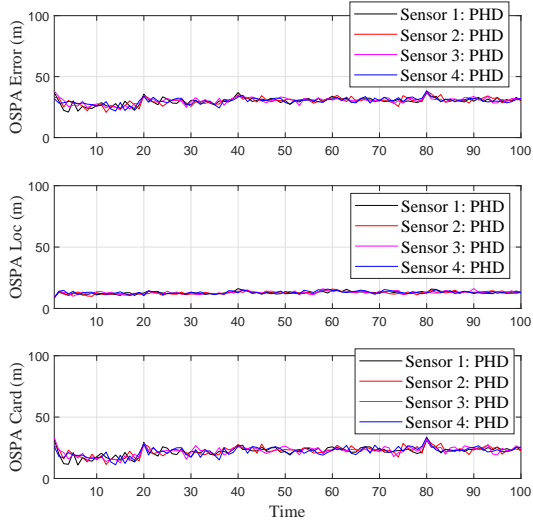}\\
  \caption{{The OSPA error, cardinality error and localization error of each local PHD filter over time based on only one GM-weight-fit iteration ($\alpha_1=0.2$) for PHD consensus}} \label{fig:PHD-opsa-I=1}
  \vspace{-2mm}
\end{figure}

\begin{figure}
  \centering
  \includegraphics[width=8cm]{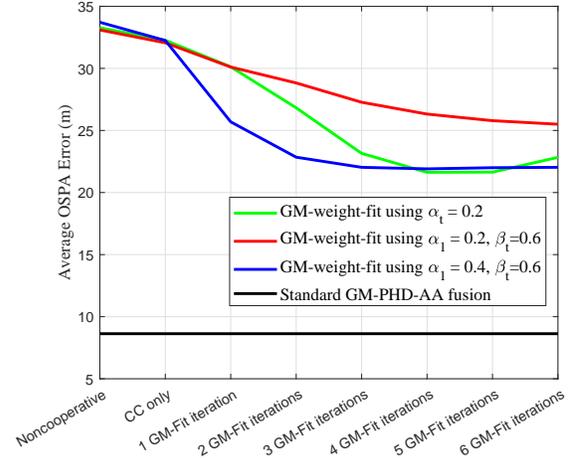}\\
  \caption{{Average OSPA errors of four PHD filters based on GM-weight-fit for PHD consensus using either fixed or adaptive learning rate in comparison with the standard GM-PHD-AA fusion approach given in \cite{Li17PC}}} \label{fig:HomoPHDcomp}
  \vspace{-2mm}
\end{figure}

%\begin{figure}
%  \centering
%  \includegraphics[width=8cm]{PHDospaHomo.eps}\\
%  \caption{Average OSPA error of each local PHD filter where the GM-PHD-AA fusion is obtained via GM merging as proposed in \cite{Li17PC}} \label{fig:Complete-PHD-Aver-opsa}.
%  \vspace{-2mm}
%\end{figure}

\begin{figure}
  \centering
  \includegraphics[width=8cm]{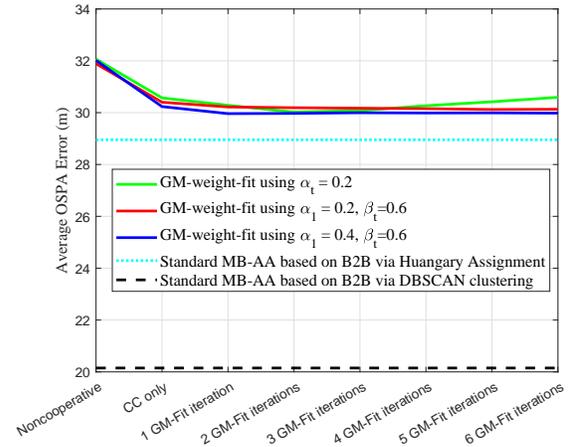}\\
  \caption{{Average OSPA errors of four MB filters based on GM-weight-fit for PHD consensus using either fixed or adaptive learning rate in comparison with the standard MB-AA filters \cite{Li20AAmb} based on two different B2B association strategies respectively}} \label{fig:HomoMBcomp}
  \vspace{-2mm}
\end{figure}

\begin{figure}
  \centering
  \includegraphics[width=8cm]{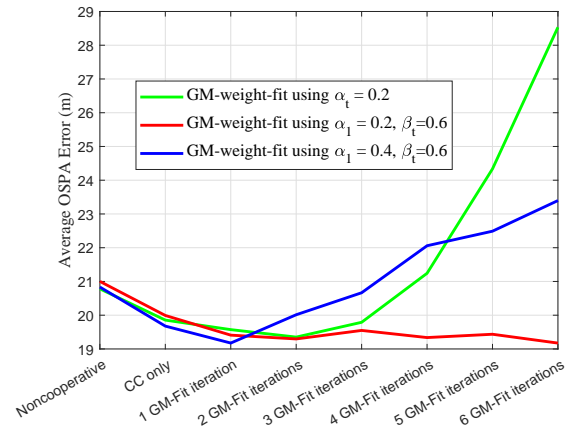}\\
  \caption{{Average OSPA errors of four LMB filter based on GM-weight-fit for PHD consensus using either fixed or adaptive learning rate}} \label{fig:HomoLMBcomp}
  \vspace{-2mm}
\end{figure}

\subsection{Heterogeneous PHD, MB and LMB fusion}
This section tests the performance of the proposed GM-AA fit approach in the heterogeneous case. The average results of the proposed GM-AA fit approach for two PHD and two MB filters cooperation, for two PHD and two LMB filters and for two-MB and two-LMB filters cooperation, {all filters using adaptive learning rate $\alpha_1=0.2, \beta_t=0.6$,} are given in Figs. \ref{fig:HeterPHDMB}, \ref{fig:HeterPHDLMB} and \ref{fig:HeterMBLMB}, respectively. {As shown, the heterogeneous fusion reduces the OSPA error of the PHD filters and even the MB filters for which more GM-weight-fit iterations lead to greater reduction of the OSPA error. This confirms the effectiveness of the proposed approach for heterogeneous RFS filter fusion. Similar to the case of homogeneous fusion in section \ref{sec:homogeneousSim}, the proposed GM-AA fit approach improves the PHD filter more than the MB/LMB filter {when they are fused with each other}. However, surprisingly, %this is not true with the LMB filter for which
the heterogenous fusion with the MB filter does not benefit the LMB filter more than the CC-only strategy, which means the proposed sequential GM-weight-fit does not improve the LMB filter except the CC part. This is mainly due to the lack of B2B association as addressed in Section \ref{sec:insufficiencyofB2B}. To address it, the fusion needs to be carried out in a track-wise fashion for which appropriate B2B association across sensors is required. We leave this to the future work. } % and the latter are easier to suffer from over-fit than the former.
%In particular, the cardinality averaging lead to worse results for the MB filter. That is, the MB filters improve the PHD filters via the proposed heterogeneous fusion but the reverse is not true. Therefore, a solution in such a case is simply not let the fusion result be feedback to the LMB/MB filter (so that their results will not be affected by the fusion at all), but only the PHD filter of which the performance gain from the fusion. That is, it suggests a directional communication and fusion protocol where the local MB/LMB filters only send their information to the PHD filters but not receive theirs. The results for this setup has been shown in Fig. \ref{fig:2PHD2MBnoMBfusion-Aver-opsa} which shows similar results. That being said,
%This is probably because the proposed GM-PHD-fit completely disregards the labels/assocation of each L-GC to the BCs. A better result can be expected if the fusion regarding the MB and LMB be carried out with regard to each BC in a proper way. We leave this to the future work.

\begin{figure}
  \centering
  \includegraphics[width=8cm]{}\\
  \caption{{The OSPA errors of two PHD and two MB filters against different levels of GM-weight-fit using adaptive learning rate $\alpha_1=0.2, \beta_t=0.6$}} \label{fig:HeterPHDMB}
  \vspace{-2mm}
\end{figure}

\begin{figure}
  \centering
  \includegraphics[width=8cm]{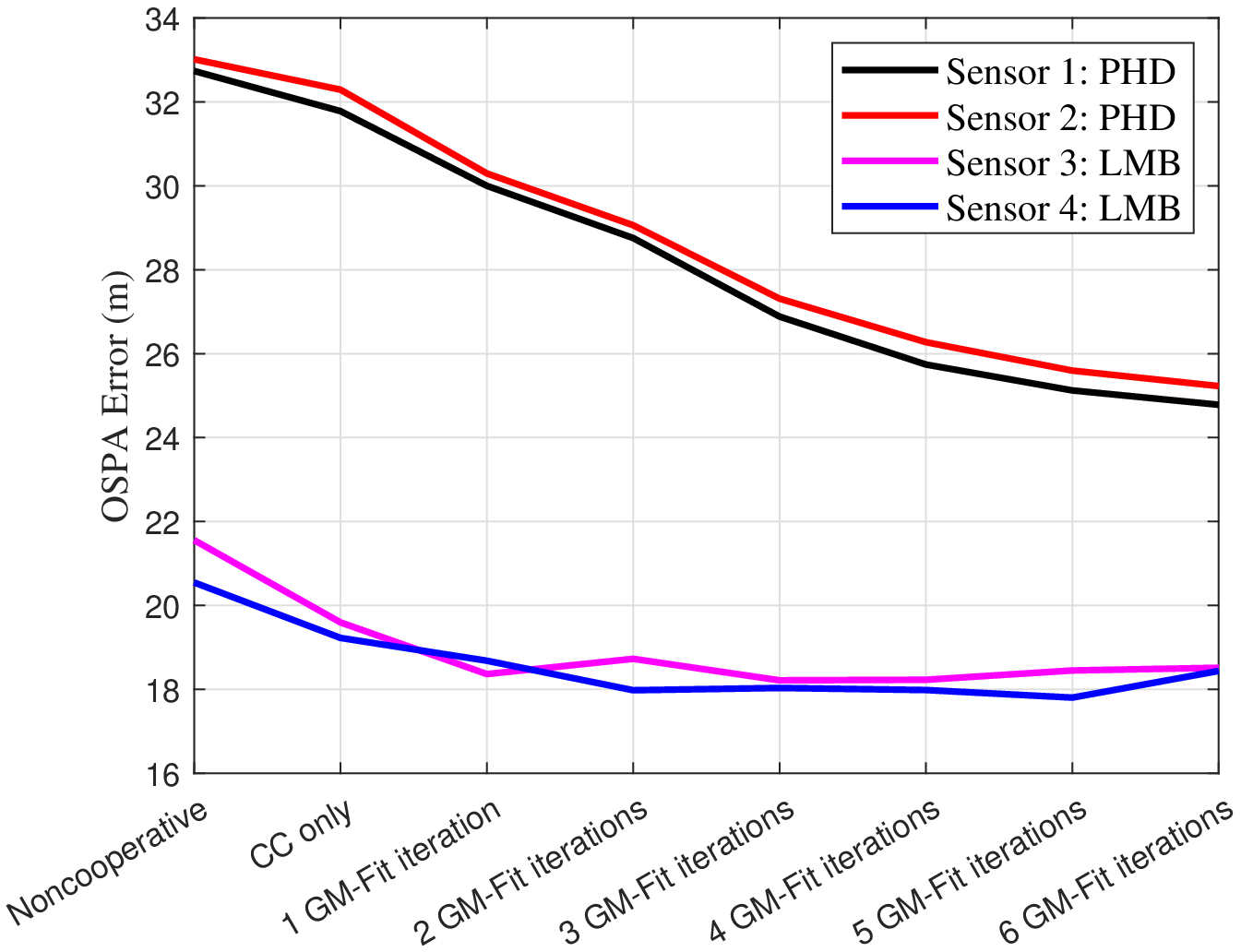}\\
  \caption{{The OSPA errors of two PHD and two LMB filters against different levels of GM-weight-fit using adaptive learning rate $\alpha_1=0.2, \beta_t=0.6$}} \label{fig:HeterPHDLMB}
  \vspace{-2mm}
\end{figure}

\begin{figure}
  \centering
  \includegraphics[width=8cm]{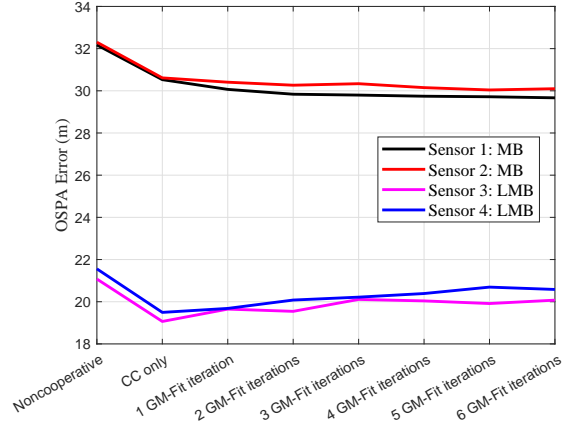}\\
  \caption{{The OSPA errors of two MB and two LMB filters against different levels of GM-weight-fit using adaptive learning rate $\alpha_1=0.2, \beta_t=0.6$}} \label{fig:HeterMBLMB}
  \vspace{-2mm}
\end{figure}

We further consider a more complicated configuration of two PHD filters, one MB filter and one LMB filter. % as linked in Fig. \ref{fig:HFframework}.
The average OSPA error of each filter based on the proposed GM-weight-fit approach {for PHD consensus} is given in Fig. \ref{fig:linearPHDMBLMB}, which confirms that the heterogeneous unlabeled PHD-AA fusion based on GM-weight-fit improve all filters, especially the PHD filters. %It also shows that the best performance of the LMB filter is achieved when the GM-weight-fit for PHD consensus is carried out for 2 iterations. That is, after a number of GM-weight-fit iterations (namely 2 here) for PHD consensus, it is better not let the fusion result be feedback to the LMB and even the MB filters, but only the PHD filter for which a larger but not too large number of PHD GM-weight-fit iterations is better.
However, {it remains open how to best set the learning rate, the fading rate as well as the number of GM-weight-fit iterations for each filter in different cooperation configures. According to our experience, a rule of thumb choice is $\alpha_1=0.2, \beta_t=0.6$.} %, t_\text{max}=5$.} %it remains open how to best determine the most suitable number of GM-weight-fit iterations for each filter in different cooperation configures. {According to our experience with use of learning rate 0.2, the rule of thumb choice for the number of GM-weight-fit iterations is 3 for the PHD filter and 2 for the MB/LMB filters}.
%Instead of specifying the number of GM-weight-fit iterations for fusion feedback for each filter, a technically more sound solution should be based on online evaluating the fusion convergence of the local filter. Fusion should be terminated once the PHD-fit becomes worse with the increase of the number of GM-weight-fit iterations. We leave this issue to the future work.

{Moreover, the average computing times costed by one iteration of the PHD, MB, LMB filtering operation separately and by their fusion operation with different numbers of GM-weight-fit iterations are also given in Fig.~\ref{fig:linearPHDMBLMB}. As shown, the proposed GM-weight-fit approach is computing more costly as compared with the filtering operations in general due to the intensive computation for calculating the correlation among all GCs as in \eqref{eq:minISD-wi}. However, the computation time required increase slowly with the increase of the number of GM-weight-fit iterations. This is because the correlation between GCs only needs to be calculated once (and be recorded for multiple uses in different iterations) because the means and variances of the GCs are unchanged. } %Further strategies are needed to speed up the calculation of the correlation between GCs.  } %immediately when it becomes not changing the local filter density significantly.

{Finally, we test the heterogeneous fusion of two PHD filters, one MB filter and one LMB filter based on the nonlinear measurement model \eqref{eq:RangeBearing} for which the local filters employ the unscented approximation for dealing with the nonlinearity. The average OSPA errors and average computing times for one filtering iteration of each filter and for their heterogeneous fusion are given in Fig. \ref{fig:nonlinearPHDMBLMB}, respectively. The results comply with those in the linear measurement case. Differently, the two PHD filters that are localized at different positions which are related to their respective range-bearing measurements perform quite differently with each other, although both of them are improved by the heterogeneous fusion more significantly as compared with the MB and LMB filters.}

\begin{figure}
  \centering
  \includegraphics[width=9cm]{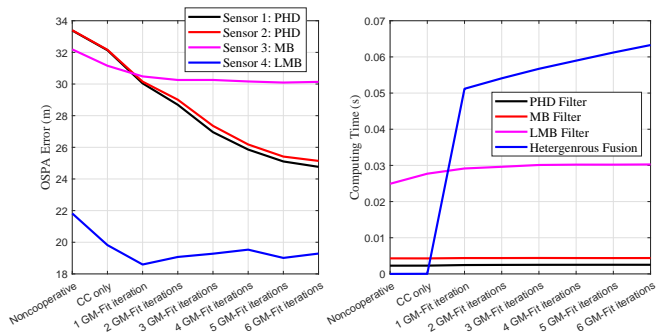}\\
  \caption{{Linear measurement case: the OSPA errors and average computing time for each filtering step of two PHD, one MB and one LMB filters, and their heterogeneous fusion time against different levels of GM-weight-fit using adaptive learning rate $\alpha_1=0.2, \beta_t=0.6$}} \label{fig:linearPHDMBLMB}
  \vspace{-2mm}
\end{figure}

%\begin{figure}
%  \centering
%  \includegraphics[width=8cm]{ComputingTime.eps}\\
%  \caption{{Average computing times of one filtering step of the respective PHD, MB and LMB filters and their heterogeneous fusion operation against different numbers of GM-weight-fit iterations. }} \label{fig:CompuTime}
%  \vspace{-2mm}
%\end{figure}

\begin{figure}
  \centering
  \includegraphics[width=9cm]{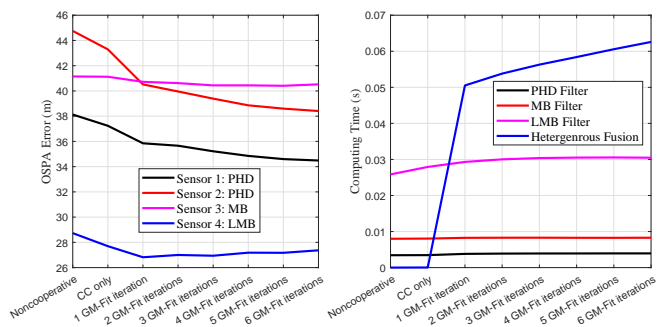}\\
  \caption{{Nonlinear measurement case: the OSPA errors and average computing time for each filtering step of two PHD, one MB and one LMB filters, and their heterogeneous fusion time against different levels of GM-weight-fit using adaptive learning rate $\alpha_1=0.2, \beta_t=0.6$}} \label{fig:nonlinearPHDMBLMB}
  \vspace{-2mm}
\end{figure}

\section{Conclusion and Future Work} \label{sec:conclusion}
We propose a heterogenous unlabeled and labeled RFS filter fusion approach which averages the unlabeled PHDs {of the diverse RFS filters} based on the GM implementation.
A computationally efficient, approximate approach to GM weight fit is proposed which sequentially revises only the weights of {the Gaussian components of the local GM in multiple iterations for PHD consensus. Both fixed and adaptive parameters have been suggested for controlling the convergence of the proposed fitting approach}. Simulations have demonstrated the effectiveness of the proposed approach {using adaptive fitting parameters} for both homogeneous and heterogeneous PHD-MB-LMB filter fusion. Our approach improves the PHD filter significantly by cooperating with the MB/LMB filters or the other PHD filters. In comparison, it improves the MB and LMB filters less. The reasons are twofold. % as addressed in remarks \ref{remark:insufficientPHD1} and \ref{remark:insufficientPHD_T2T}. %
First, the PHD is only the first order moment of the MB/LMB density for which the PHD-consensus is insufficient for MB/LMB consensus. Second, the proposed heterogenous fusion is carried out in the unlabeled domain which applies no track/label matching for track-wise fusion. This leaves great space for further improvement.
%However, the proposed approach leads to over-fit easier with the increase of the number of GM-weight-fit iterations in the LMB filter than in the MB/PHD filters.
%It is open to optimally determine the best number of GM-weight-fit iterations in different scenarios or  design a better multivariate optimization solver for the general PHD-GM fit problem. %multi-object optimization problem \eqref{eq:opt_W_ISD}. % such as optimizing all parameters of the L-GCs not only the weights. %Better approximate approach to the PHD-GM-weight-fit remains open.

Improvement can be expected if more parameters of the L-GCs such as the means and covariances can be jointly optimized with the weights for better GM-fit and if the BCs can be properly associated (or the labels can be properly matched) in the case of MB/LMB filter for B2B-based PHD fusion. {To this end, advanced approximation/optimization methods such as variational Bayes and expectation maximization can be considered.} %This may then lead to a hybrid of unlabeled PHD fusion and B2B-based/labeled PHD fusion, as illustrated in Fig. \ref{fig:HHFframework}.
However, label/track matching/association remains an open-ended, challenging issue even for homogeneous MB/LMB filter fusion and needs further investigation. {Valuable future works also include addressing the partially overlapping field of view of the sensors, unknown target birth/dynamics/death models, unknown sensor profiles (such as the target detection probability and clutter rate) and so on based on the framework of the heterogeneous fusion.} %Once this can be properly addressed, the fusion between MB, LMB, and other labeled filters can be performed in the track-wise manner, which can be done in parallel with the %, which together with the proper definition of the divergence/distance between LRFS densities still need further investigation.

%\begin{figure}
%  \centering
%  % Requires \usepackage{graphicx}
%  \includegraphics[width=7cm]{HHsensor.eps}\\  % _extended
%  \caption{Heterogenous unlabeled/labeled RFS filter cooperation based on a hybrid of unlabeled PHD fusion and B2B/labeled PHD fusion accordingly. } \label{fig:HHFframework}
%\end{figure}

\appendix
\subsection{Partial differential of the ISD of GMs} \label{appendix:iseMIN}
The ISD between two GMs $p(\mathbf{x}) = \sum_{n \in \mathbf{I}_1}\alpha_n \mathcal{N}(\mathbf{x}; {\bm{\mu}_n},{\mathbf{P}_n}),
  q(\mathbf{x}) = \sum_{m \in \mathbf{I}_2}\beta_m\mathcal{N}(\mathbf{x}; {\mathbf{m}_m},{\mathbf{S}_m})$ is given as follows {\cite{Williams06}}
\begin{align}
\text{ISD}\big(p\|q\big)
 \triangleq & \int{\big( p(\mathbf{x}) - q(\mathbf{x}) \big)^2 d \mathbf{x}} \\
%=& \sum\limits_{n,n'} {{\alpha_n}{\alpha _{n'}}} \int {{p_n}(x)} {p_{n'}}(x)dx \nonumber \\
%               & + \sum\limits_{m,m'} {{\beta_m}{\beta _{m'}}} \int {{q_m}(x)} {q_{m'}}(x)dx \nonumber\\
%               &- 2\sum\limits_{n,m} {{\alpha_n}{\beta_m}\int {{p_n}(x)} {q_m}(x)dx} \nonumber\\
& = \text{ISD}_{\alpha} + \text{ISD}_{\beta} -2\text{ISD}_{\alpha \beta}
%=  & \sum\limits_{n,n'\in \mathbf{I}_1} {{\alpha_n}{\alpha _{n'}}\mathcal{N}({\bm{\mu}_n}; {\bm{\mu} _{n'}},{\mathbf{P}_n}+ {\mathbf{P} _{n'}})}
% \nonumber \\
% & + \sum\limits_{m,m'\in \mathbf{I}_2} {{\beta_m}{\beta _{m'}}\mathcal{N}({\mathbf{m}_m}; {\mathbf{m}_{m'}},{\mathbf{S}_m} + {\mathbf{S}_{m'}})}  \nonumber \\
% & - 2\sum\limits_{n\in \mathbf{I}_1,m\in \mathbf{I}_2} {{\alpha_n}{\beta_m}\mathcal{N}({\bm{\mu}_n}; {\mathbf{m}_m},{\mathbf{P}_n}{\rm{ + }}{\mathbf{S}_m})}\label{eq:alphaBetaISD}
\end{align}
where
\begin{align}
\text{ISD}_\alpha & = \sum\limits_{n,n'\in \mathbf{I}_1} {{\alpha_n}{\alpha _{n'}}\mathcal{N}({\bm{\mu}_n}; {\bm{\mu} _{n'}},{\mathbf{P}_n}+ {\mathbf{P} _{n'}})}
 \label{eq:alphaISD}\\
\text{ISD}_\beta & = \sum\limits_{m,m'\in \mathbf{I}_2} {{\beta_m}{\beta _{m'}}\mathcal{N}({\mathbf{m}_m}; {\mathbf{m}_{m'}},{\mathbf{S}_m} + {\mathbf{S}_{m'}})}  \label{eq:BetaISD}\\
\text{ISD}_{\alpha \beta} & =  \sum\limits_{n\in \mathbf{I}_1,m\in \mathbf{I}_2} {{\alpha_n}{\beta_m}\mathcal{N}({\bm{\mu}_n}; {\mathbf{m}_m},{\mathbf{P}_n}{\rm{ + }}{\mathbf{S}_m})} \label{eq:alphaBetaISD}
\end{align}
%The ISD approach was first proposed for MR in the context of multiple hypothesis tracking in \cite{Williams06}, which inspired the normalized ISD \cite{Petrucci05} and further development \cite{Chen10}. %, and the squared distance \citep{Kurkoski08MR}.
%One distinctive feature of the method is the availability of exact analytical expressions for GMs. However, the ISD typically has many local minima; hence gradient-based methods cannot guarantee convergence to the global minimum, unless the initialization point happens to be close to the global minimum \cite{Williams06}. Of sufficient similarity to the ISD, %Cauchy-Schwarz inequality \citep{Scott01MR} and
%Wasserstein distance \cite{Assa18} and Pearson $\chi^2$-divergence \cite{Kitagawa20} are also used for GM reduction.

Here, it is straightforward to derive that
\begin{align}
  \frac{\partial  \text{ISD}\big(p\|q\big) }{\partial \alpha_n} = & 2\sum\limits_{n' \neq n, n' \in \mathbf{I}_1} {{\alpha _{n'}}\mathcal{N}({\bm{\mu}_n};{\bm{\mu} _{n'}},{\mathbf{P}_n}+ {\mathbf{P} _{n'}})}
\nonumber\\
 & + {2{\alpha _{n}}\mathcal{N}({\bm{\mu}_n}; {\bm{\mu} _{n}},2{\mathbf{P}_n} )} \nonumber\\
 & - 2\sum\limits_{m\in \mathbf{I}_2} {{\beta_m}\mathcal{N}({\bm{\mu}_n};{\mathbf{m}_m},{\mathbf{P}_n}{\rm{ + }}{\mathbf{S}_m})} \label{eq:derISD}
 \end{align}
 \begin{align}
 \frac{\partial^2 \text{ISD}\big(p\|q\big) }{\partial \alpha_n^2}
 & = {2 \mathcal{N}({\bm{\mu}_n};{\bm{\mu}_n}, 2{\mathbf{P}_n})} \nonumber \\
 & = \frac{2}{{{\left| {{2\bf{P}}_n } \right|}^{1/2}} { (2\pi )}^{d/2} } \label{eq:TwicederISD} \\
 & >0
\end{align}
Setting \eqref{eq:derISD} %$\frac{\partial  \text{ISD}\big(p\|q\big)}{\partial \alpha_n}$
zero will lead to
\begin{align}
\alpha_n & =
% &\frac{\sum\limits_{m \in \mathbf{I}_2} {{\beta_m}\mathcal{N}({\bm{\mu}_n};{\mathbf{m}_m},{\mathbf{P}_n}{\rm{ + }}{\mathbf{S}_m})}} { \mathcal{N}({\bm{\mu}_n}; {\bm{\mu} _{n}},2{\mathbf{P}_n} )}\nonumber \\ &-\frac{\sum\limits_{n' \neq n, n'\in \mathbf{I}_1} {{\alpha _{n'}}\mathcal{N}({\bm{\mu}_n};{\bm{\mu} _{n'}},{\mathbf{P}_n} \rm{+} {\mathbf{P} _{n'}})}}{2 \mathcal{N}({\bm{\mu}_n}; {\bm{\mu} _{n}},2{\mathbf{P}_n} )} \label{eq:minISD-a}
  {{\left| {{2\bf{P}}_n } \right|}^{1/2}} { (2\pi )}^{d/2}\sum\limits_{m \in \mathbf{I}_2} {{\beta_m}\mathcal{N}({\bm{\mu}_n};{\mathbf{m}_m},{\mathbf{P}_n}{\rm{ + }}{\mathbf{S}_m})} \nonumber \\
&- {{\left| {{2\bf{P}}_n } \right|}^{1/2}} { (2\pi )}^{d/2}\sum\limits_{n' \neq n} {{\alpha _{n'}}\mathcal{N}({\bm{\mu}_n};{\bm{\mu} _{n'}},{\mathbf{P}_n} \rm{+} {\mathbf{P} _{n'}})}  \label{eq:minISD-a}
\end{align}
which yields the minimum ISD given all the other L-GC weights $\{\alpha_{n'}\}_{n' \neq n, n' \in \mathbf{I}_1}, \{\beta\}_{m \in \mathbf{I}_2}$.

\section*{Acknowledgement}
The authors would like to thank the anonymous reviewers for their constructive and detailed comments, which motivated the idea of designing the adaptive learning rate in the proposed GM-weight-fit approach for PHD consensus.

%\section*{References}
%\bibliographystyle{elsarticle-num}
\bibliographystyle{IEEEtran}
\bibliography{HeterogenousFusion}

%\begin{thebibliography}{Reference}
%% \bibitem{label}
%% Text of bibliographic item
%\bibitem{}
%\end{thebibliography}

\end{document}